\documentclass[
    reprint,
    superscriptaddress,
    amsmath,
    aps,
    prl,
]{revtex4-2}

\usepackage{amsmath,amsthm}
\usepackage{newpxtext,newpxmath}
\usepackage{graphicx}
\usepackage{color}
\usepackage{amsfonts}
\usepackage{braket}
\usepackage{dsfont}
\usepackage{mathtools}
\usepackage{mathrsfs}
\usepackage[cal=boondox]{mathalfa}
\usepackage{bibunits}
\usepackage{etoolbox}
\makeatletter
\AtBeginDocument{%
    \global\let\auto@bib\@empty
    \patchcmd{\endthebibliography}{\label{LastBibItem}}%
        {\label{LastBibItem-\bibunitid}}{}%
        {\PackageError{arxiv}{Cannot patch bibliography label}{}}
}
\newcommand{\setbibunitid}[1]{%
    \def\bibunitid{#1}%
    \def\@extra@b@citeb{.#1}%
    \def\@extra@binfo{.#1}%
}
\newcommand{\supplementbibnotes}{\let\@FMN@list\@empty}
\newcommand{\putunitbibliography}{%
    \immediate\write\@bibunitaux{\string\citation{REVTEX42Control,apsrev42Control}}%
    \expandafter\let\expandafter\r@LastBibItem
        \csname r@LastBibItem-\bibunitid\endcsname
    \putbib[bibliography,\jobname Notes]%
}
\makeatother
\usepackage{hyperref}

\newif\ifsupplementtoc
\let\originaladdtocontents\addtocontents
\renewcommand{\addtocontents}[2]{%
    \ifstrequal{#1}{toc}{%
        \ifsupplementtoc\originaladdtocontents{#1}{#2}\fi
    }{\originaladdtocontents{#1}{#2}}%
}

\usepackage{booktabs}
\usepackage{makecell}

\newtheorem*{definition*}{Definition}
\newtheorem{proposition}{Proposition}
\newtheorem{corollary}{Corollary}
\newtheorem{lem}{Lemma}

\NewDocumentCommand{\ketbra}{m g}{%
    \IfNoValueTF{#2}{\ket{#1}\!\bra{#1}}{\ket{#1}\!\bra{#2}}%
}

\newcommand{\eye}{\mathds{1}}
\newcommand{\Tr}{\text{Tr}}
\DeclareMathOperator{\tr}{Tr}
\DeclareMathOperator{\diag}{diag}
\newcommand{\D}{\text{d}}
\newcommand{\lind}{{\mathscr{L}}}
\newcommand{\J}{\mathbf{J}}

\usepackage{xspace}
\newcommand{\mname}{NEAT\xspace}

\begin{document}
\setbibunitid{main}
\begin{bibunit}[apsrev4-2]

\title{Cooling more while dissipating less: \\anomalous tradeoffs from thermodynamic geometry}

\author{Andrea Canzio}
\email{andrea.canzio@sns.it}
\affiliation{NEST, Scuola Normale Superiore, I-56126 Pisa, Italy}

\author{Vasco Cavina}
\affiliation{NEST, Scuola Normale Superiore, I-56126 Pisa, Italy}

\author{Salvatore Raia}
\affiliation{NEST, Scuola Normale Superiore, I-56126 Pisa, Italy}

\author{Vito Vetrano}
\affiliation{NEST, Scuola Normale Superiore, I-56126 Pisa, Italy}

\author{Matteo Scandi}
\affiliation{Instituto de Física Teórica UAM/CSIC, C. Nicolás Cabrera 13-15, Cantoblanco,
28049 Madrid, Spain}

\author{Paolo Abiuso}
\email{paolo.abiuso@oeaw.ac.at}
\affiliation{Institute for Quantum Optics and Quantum Information - IQOQI Vienna,
    Austrian Academy of Sciences, Boltzmanngasse 3, A-1090 Vienna, Austria}

\begin{abstract}
    We show that, starting from the same state, optimal driving can sometimes reach a colder (or hotter) target in the same amount of time while dissipating less energy, or equivalently dissipate the same amount of energy in a shorter time.
    We study this counterintuitive phenomenon in the slow-driving regime of an open quantum system, where we name it Near-Equilibrium Anomalous Tradeoff (NEAT).
    We establish necessary conditions for its occurrence and show its underlying geometric origin. While NEAT is reminiscent of the well-known Mpemba effect, the two phenomena are distinct in general.
    We demonstrate NEAT in an exactly solvable qutrit model, as well as for $d$-dimensional birth–death dynamics with independently controlled energy levels.
    Finally, we show how NEAT can improve the performance of Carnot engines.
    \end{abstract}

\maketitle

\begin{table}[t]
\footnotesize
\renewcommand{\arraystretch}{1.15}
\setlength{\tabcolsep}{4pt}
\begin{tabular*}{\columnwidth}{@{\extracolsep{\fill}} l c c}
\toprule
 & \textbf{NEAT} & \textbf{Mpemba} \\
\midrule
Process class & Driven processes & Free relaxations \\
Anomaly & Dissipation $\times$ Time & Time \\
Typical instance & Close to equilibrium & Far from equilibrium \\
\addlinespace[5pt]
Origin
 & \makecell{Global property of\\thermo-geometry}
 & \makecell{Local property of\\dynamical generator} \\
\addlinespace[5pt]
\makecell[l]{Improves the\\performance of} & Carnot engines & Otto engines \\
\bottomrule
\end{tabular*}
\caption{The NEAT effect uncovered in this work is reminiscent of the
well-known Mpemba effect. Nevertheless, the two phenomena have clear-cut
distinct origins and properties, which we discuss below. In fact, there
exist systems featuring NEAT without the Mpemba effect---and vice versa.}
\label{tab:neat-mpemba}
\end{table}

\textbf{\emph{Introduction.---}}
Nonequilibrium systems can exhibit counterintuitive phenomena, from dissipative phase transitions~\cite{sieberer2025universality,rodriguez2017probing} to negative differential response~\cite{baerts2013frenetic,benichou2014negative} and spontaneous synchronization~\cite{acebron2005kuramoto,lee2013synchronization}. Beyond their fundamental interest, these phenomena challenge familiar expectations about how physical processes unfold and suggest new strategies for quantum-state preparation and sensing~\cite{verstraete2009dissipation,ilias2022criticality}. Particular interest has been devoted to the concept of anomalous relaxation, including asymmetric thermalization speed~\cite{lapolla2020faster,vu2021toward,ibanez2024heating} and anomalous speedups~\cite{TEZA2026speedups,kumar2020exponentially}. A striking example of the latter is the Mpemba effect, i.e., the possibility for initially hotter water to freeze faster than initially colder water under some experimental conditions~\cite{mpemba1969cool,jeng2006mpemba}.
The term Mpemba effect has subsequently been used to describe several forms of anomalous relaxation, including the anomalous symmetry restoration associated with the quantum Mpemba effect~\cite{ares2023asymmetry,rylands2024microscopic,moroder2024thermodynamics,ares2025quantum,summer2026resource} and the anomalous hierarchy of relaxation times occurring in the Markovian Mpemba effect which, unlike its historical counterpart, does not rely on the occurrence of a phase transition~\cite{LuRaz2017nonequilibrium,klich2019mpemba,carollo2021exponentially, PhysRevLett.131.080402}.

The anomalous effects discussed above typically arise far from equilibrium, naturally raising the question of whether analogous signatures exist close to equilibrium. In this work, we answer this question in the affirmative.

We consider a system initially in thermal equilibrium with a reservoir and cool it by driving its Hamiltonian parameters while keeping the reservoir temperature fixed. In the slow-driving regime, the system remains close to the equilibrium state of its instantaneous Hamiltonian, and the dissipated work is described by thermodynamic
geometry~\cite{salamon1983length,crooks2007length,scandi2019thermodynamiclength,abiuso2020geometric}.
The associated length determines the minimum dissipative cost at fixed duration, to leading order in the slow-driving expansion~\cite{sivak2012metrics,cavina2017slowdynamics,scandi2019thermodynamiclength}.
In this context, we identify an anomalous behavior in the dissipation–time tradeoff:
increasing the temperature difference between the initial and final states can reduce the minimum dissipation required to complete the process in a fixed time.
Equivalently, a colder/hotter target can be reached in less time at the same dissipative cost.
We call this effect the \emph{Near-Equilibrium Anomalous Tradeoff} (\mname). 
While being evocative of the Mpemba effect and its path-based variants~\cite{gal2020precooling,nava2025pontus,Peluso2026optimal}, NEAT is a qualitatively different phenomenon, as we summarize in Table~\ref{tab:neat-mpemba}.

Understanding when this reversal occurs requires examining how relaxation dynamics shape the geometry of equilibrium states.
We derive conditions that constrain its occurrence and show how it emerges in a number of systems, from an analytically solvable qutrit geometry to birth–death dynamics with a structured thermal bath.
In the latter setting, an explicit expression for the dissipated work connects the anomalous temperature-dependence to the underlying transition rates.
Beyond state preparation, we discuss how this advantage can improve the power–efficiency tradeoff of
finite-time Carnot engines.

Finally, we note that, while in the following we will focus on cooling, NEAT can occur in both directions, and the entire analysis extends similarly to heating protocols.
\vspace{1em}

\textbf{\emph{The Near-Equilibrium Anomalous Tradeoff.---}}
We begin by formally defining the setup, which is illustrated in Fig.~\ref{fig:protocol}.
Consider a quantum system with Hamiltonian $H_t = \sum_a \lambda^a(t) X_a$, with $\boldsymbol{\lambda}(t)=(\lambda^a(t))_a$ the vector of
tunable control parameters, initially at thermal equilibrium with a reservoir at inverse
temperature $\beta$.
We want to cool the system to a lower temperature, corresponding to inverse temperature $\beta' > \beta$, in a time $\tau$,
while keeping it in contact with the reservoir.
The canonical equilibrium protocol achieves this by driving the Hamiltonian from $H_0$ to $H_\tau = (\beta' / \beta) H_0$,
so that at the end of the transformation the system will be in the state
\begin{align}
    \omega_{H_\tau} (\beta)
    \coloneqq \frac{e^{-\beta H_\tau}}{\tr[e^{-\beta H_\tau}]}
    = \frac{e^{-\beta' H_0}}{\tr[e^{-\beta' H_0}]}
    = \omega_{H_0} (\beta').
\end{align}

The work cost of the cooling protocol is quantified by
$W = \int_0^\tau \D t\, \tr [\rho_t \dot{H}_t]$,
with $\rho_t$ being the system state at time $t$.
In the quasi-static limit $\tau \rightarrow \infty$, the state will
be the instantaneous Gibbs state $\rho_t = \omega_{H_t}(\beta)$ at all times, 
while the work required is
$W = W_{\rm rev} \coloneqq F_{\beta}(H_\tau) - F_{\beta}(H_0)$,
where $F_\beta(H) = - \beta^{-1} \ln \tr [e^{-\beta H}]$ is the equilibrium free
energy.
We observe that in this reversible limit the work cost is independent of the specific path
$\{\boldsymbol{\lambda}(t)\}_{t}$ chosen but depends solely on the endpoint
$H_\tau = (\beta'/\beta) H_0$, and it is monotonic in the target inverse temperature $\beta'$.

In any realistic setting, though, $\tau$ will be finite and some excess amount of energy $W_{\rm diss}$ will be irreversibly spent to cool the system.
Assuming that the total duration is much longer than the typical relaxation timescales of the dynamics generated by the contact with the environment, one can perturbatively
expand the total work as
\begin{equation}
    \label{eq:diss-work}
    W_{\rm diss} \coloneqq
    W - W_{\rm rev} =
    \frac{\Sigma[\{\boldsymbol{\lambda}(t)\}_{t}]}{\tau}
    + \mathcal{O}\left(\frac{1}{\tau^2}\right),
\end{equation}
where $\Sigma[\{\boldsymbol{\lambda}(t)\}_{t}] \geq 0$ quantifies the leading-order dissipation generated during a protocol
$\{\boldsymbol{\lambda}(t)\}_{t}$. The expansion~\eqref{eq:diss-work} corresponds to the so-called
\emph{low-dissipation} or \emph{slow-driving} limit~\cite{cavina2017slowdynamics,abiuso2020geometric}\footnote{The expansion assumes a fixed smooth path and schedule uniformly stretched with $\tau$, with slow driving throughout and a relaxation gap bounded away from zero along the path.
}.

Unlike in the quasi-static case, the value of $\Sigma$ is protocol-dependent and
can be optimized.
Then, it is natural to define its achievable minimum:
\begin{equation}
    \label{eq:min-diss}
    \begin{split}
        & W_{\rm diss} \tau\geq \Sigma_{\rm min}(\beta\rightarrow\beta') + \mathcal{O}(\tau^{-1}), {\;\;\rm where} \\
        & \Sigma_{\rm min}(\beta \rightarrow \beta')
        \coloneqq \min_{\substack{\{\boldsymbol{\lambda}(t)\}_{t} \in \Lambda,\\
        \scalebox{0.6}{$
            H_0=H,\; H_\tau = (\beta'/\beta) H
            $}}
        }\;\Sigma[\{\boldsymbol{\lambda}(t)\}_{t}],
    \end{split}
\end{equation}
with $\Lambda$ representing the set of achievable control histories.
$\Sigma_{\rm min}(\beta \rightarrow \beta')$
is thus a useful characterization of the
optimal dissipation-time tradeoff of the transformation.
The key observation of the present work is that, even if $W_{\rm rev}$ is monotonic in the temperature difference, this is not necessarily the case for  $\Sigma_{\rm min}(\beta \rightarrow \beta')$. In this context, we define the \mname effect as follows:

\begin{definition*} A \emph{Near-Equilibrium Anomalous Tradeoff} (\mname effect) occurs
when, while cooling a system optimally in a finite time, one can dissipate less energy by cooling it to a
lower temperature. In the notation of Eq.~\eqref{eq:min-diss},
    \begin{equation} \label{eq:neat-definition}
        \exists\ \beta_2 > \beta_1 > \beta
        \text{ s.t. }
        \Sigma_{\rm min}(\beta\to\beta_1)
        > \Sigma_{\rm min}(\beta\to\beta_2).
    \end{equation}
\end{definition*}

\begin{figure}[t]
    \centering
    \includegraphics[width=\columnwidth]{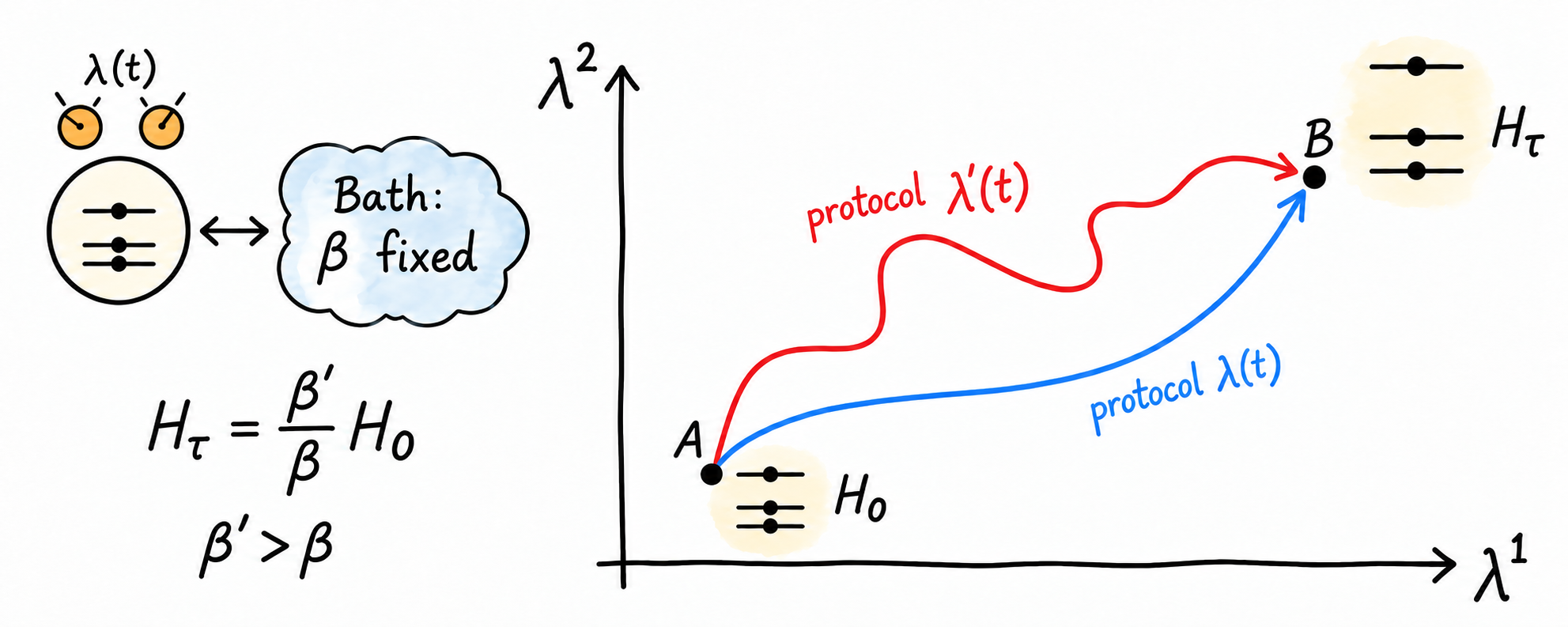}
    \caption{
        Cooling protocol at fixed bath inverse temperature $\beta$.
        Different control paths connect $H_0$ to $H_\tau=(\beta'/\beta)H_0$,
        with $\beta'>\beta$, yielding a final equilibrium state equivalent
        to cooling the initial Hamiltonian to inverse temperature $\beta'$.}
    \label{fig:protocol}
\end{figure}

In order to compute $\Sigma_{\rm min}(\beta \rightarrow \beta')$ we need to introduce a formal
description of the thermalization dynamics.
We assume that the dynamics is described by a GKSL
\emph{thermalizing} Markovian master equation~\cite{breuer2002theory}, i.e.,
any time-dependent Lindbladian whose unique instantaneous steady state is
the system Gibbs state,
\begin{align} \label{eq:scaledLind}
    \dot\rho_t = \lind_t[\rho_t],
    \qquad
    \lind_t [ \omega_{t}(\beta) ] = 0,
\end{align}
where $t\in[0,\tau]$,
and $\omega_t(\beta)\coloneqq\omega_{H_t}(\beta)$.
In the slow-driving regime, the state $\rho_t\approx\omega_t + \mathcal{O}(1/\tau)$ closely follows the instantaneous Gibbs state, with corrections that can be computed at all orders in $\tau^{-1}$~\cite{cavina2017slowdynamics}. In particular, the leading correction to work $W_{\rm diss}$~\eqref{eq:diss-work} takes the form of a quadratic form integral, defining the well-known \emph{thermodynamic length} metric, which has been derived in several contexts~\cite{salamon1983length,crooks2007length,scandi2019thermodynamiclength,abiuso2020geometric},
and measures the dissipative cost of moving along different control directions,
incorporating the near-equilibrium response induced by $\lind_t$~\cite{scandi2019thermodynamiclength} (we also provide a derivation in SM~\cite{supplemental}, Sec.~I):
\begin{align}
    \Sigma [ \{\boldsymbol{\lambda}(t)\}_{t}]
    = \tau \int_0^\tau \D t \, m_{ab}(\boldsymbol{\lambda}) \dot{\lambda}^a(t) \dot{\lambda}^b(t),
    \label{eq:quadratic-param}
\end{align}
with metric $m_{ab} \coloneqq (\tilde{m}_{ab} + \tilde{m}_{ba})/2$,
and
\begin{align}
    \label{eq:metric}
    \tilde{m}_{ab} \coloneqq - \beta \Tr \left[
        X_a \lind^+_t [ \J_{\omega_t} [X_b] ]
    \right].
\end{align}
Here, $\lind_t^+$ is the Drazin inverse of the Lindbladian~\footnote{For a Lindbladian with unique stationary state $\omega_t$, $\lind_t^+$ is its inverse on traceless operators, extended by $\lind_t^+[\omega_t]=0$. Thus $\lind_t\lind_t^+[X]=\lind_t^+\lind_t[X]=X-\omega_t\Tr X$; see Ref.~\cite{scandi2019thermodynamiclength}, Eq.~(11) and Appendix~C.}, while $\J_{\omega_t}$ is the centred Kubo--Mori map~\cite{petz1993bogoliubov,scandi2025quantum}, defined as $\J_\omega[X]\coloneqq\int_0^1\D u\,\omega^{1-u}
\bigl(X-\Tr[\omega X]\eye\bigr)\omega^u$.

Equation~\eqref{eq:quadratic-param} implies that the dissipation only depends on the geometry
induced by the Riemannian metric $m_{ab}$
and satisfies the thermodynamic length inequality
\begin{align}
    \Sigma [\{ \boldsymbol{\lambda}(t) \}_t]
    \geq
    \ell ^2 [\{ \boldsymbol{\lambda}(t) \}_t]
    \geq L^2(\beta \rightarrow \beta'),
\end{align}
where $\ell [ \{\boldsymbol{\lambda}(t)\}_{t}] \coloneqq \int_0^\tau \D t \, \sqrt{m_{ab}(\boldsymbol{\lambda}(t)) \dot{\lambda}^a(t) \dot{\lambda}^b(t) }$
is the path length
and $L(\beta \rightarrow \beta')$ is the \emph{thermodynamic length} given by
the associated geodesics:
\begin{align}
    \label{eq:thermo-length}
    L(\beta \rightarrow \beta')
    \coloneqq \inf_{\substack{\{\boldsymbol{\lambda}(t)\}_{t},\\
    \scalebox{0.6}{$
        H_0=H,\; H_\tau = (\beta'/\beta) H
        $}}
    }\; \ell [\{\boldsymbol{\lambda}(t)\}_{t}].
\end{align}
We therefore see how, in the low-dissipation regime, the \mname effect takes a completely
geometric form: the minimizer $\{ \boldsymbol{\lambda}_{\rm opt}(t) \}_t$ in Eq.~\eqref{eq:thermo-length} is by
definition the \emph{geodesic trajectory}.
If such a geodesic is achievable, i.e.~$\{ \boldsymbol{\lambda}_{\rm opt}(t) \}_t \in \Lambda$,
the thermodynamic length inequality can be saturated as
$\Sigma_{\rm min}(\beta\rightarrow\beta')
= \Sigma[\{\boldsymbol{\lambda}_{\rm opt}(t)\}_t]
= L^2(\beta \rightarrow \beta')$.
This allows us to recast the NEAT effect and study the minimally dissipative trajectories
in a geometric language, that is:

\begin{proposition}
    \label{prop:geom_effect}
    The \mname effect occurs if and only if there exist $\beta_2>\beta_1>\beta$ such that
    \begin{align}
        L(\beta\rightarrow\beta_1)
        &>L(\beta\rightarrow\beta_2).
    \end{align}
\end{proposition}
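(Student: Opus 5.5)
The plan is to show that, under the standing assumption that geodesics are achievable (or, more generally, that the infimum over $\Lambda$ of $\Sigma$ equals the squared thermodynamic length), we have $\Sigma_{\rm min}(\beta\to\beta')=L^2(\beta\to\beta')$ for every $\beta'>\beta$. The proposition then follows at once from the definition in Eq.~\eqref{eq:neat-definition}. Since $L\geq 0$ and $x\mapsto x^2$ is strictly increasing on $[0,\infty)$, the inequality $\Sigma_{\rm min}(\beta\to\beta_1)>\Sigma_{\rm min}(\beta\to\beta_2)$ is equivalent to $L(\beta\to\beta_1)>L(\beta\to\beta_2)$ for the same pair $\beta_2>\beta_1>\beta$. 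This gives both directions of the ``if and only if'' with identical witnesses.

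The core step is the identity $\Sigma_{\rm min}=L^2$, which I would prove as two inequalities. The lower bound $\Sigma[\{\boldsymbol\lambda(t)\}_t]\geq \ell^2\geq L^2$ is already stated. It comes from Cauchy--Schwarz applied to Eq.~\eqref{eq:quadratic-param}:
\begin{align}
\ell^2=\Bigl(\int_0^\tau \D t\,\sqrt{m_{ab}\dot\lambda^a\dot\lambda^b}\Bigr)^2\leq \tau\int_0^\tau \D t\, m_{ab}\dot\lambda^a\dot\lambda^b=\Sigma .
\end{align}
Equality holds exactly when the integrand $m_{ab}\dot\lambda^a\dot\lambda^b$ is constant in $t$.

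For the upper bound, I would take a geodesic path, or a minimizing sequence of paths whose lengths approach $L$, and reparametrize it at constant metric speed. This leaves $\ell$ unchanged, because the length is reparametrization invariant, and it saturates Cauchy--Schwarz, so $\Sigma=\ell^2$. Taking the infimum then gives $\Sigma_{\rm min}\leq L^2$.

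The main obstacle is the interplay between the control set $\Lambda$ and the reparametrization. One must check that the constant-speed reparametrization of an admissible path stays in $\Lambda$, and that the endpoint constraints $H_0=H$, $H_\tau=(\beta'/\beta)H$ are the same in both optimizations; the latter holds by inspection of Eqs.~\eqref{eq:min-diss} and \eqref{eq:thermo-length}. If the minimum is not attained, I would work with infima and near-optimal sequences, which the argument above handles. I would also state explicitly that the equivalence relies on the achievability assumption preceding the proposition. Without it, only the one-sided bound $\Sigma_{\rm min}\geq L^2$ holds, and the ``only if'' direction could fail.
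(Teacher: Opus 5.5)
Your argument is correct and essentially matches the paper's proof. Under the saturation assumption $\Sigma_{\rm min}(\beta\to\beta')=L^2(\beta\to\beta')$ (Cauchy--Schwarz saturated by a constant-speed schedule along a shortest path in $\Lambda$), squaring the strict inequality between nonnegative lengths gives the NEAT inequality; you make both directions explicit via strict monotonicity of $x\mapsto x^2$ on $[0,\infty)$, whereas the supplemental proof writes out only the ``if'' direction.
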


\textbf{\emph{Conditions for the \mname effect.---}}
We now provide necessary and sufficient conditions for the NEAT effect
to arise based on the latter formulation.

The first condition to impose is some degree of anisotropy in the
thermalization dynamics:
\begin{proposition}[No NEAT with uniform thermalization]
    \label{prop:single-timescale}
    No \mname effect can appear if the dynamics has a single timescale, that is, if
    the Lindbladian has the form
    \begin{align}
    \label{eq:simple_lind}
        \lind_t[\rho]=\tau_{\rm eq}^{-1} (\omega_t-\rho).
    \end{align}
\end{proposition}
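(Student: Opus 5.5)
The plan is to compute the metric~\eqref{eq:metric} explicitly for the Lindbladian~\eqref{eq:simple_lind}, identify it with a multiple of the Kubo--Mori (Bogoliubov) metric, and then show that the relevant thermodynamic length is monotone in $\beta'$ via a direct lower bound on geodesics.

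\emph{Computing the metric.} For $\lind_t[\rho]=\tau_{\rm eq}^{-1}(\omega_t-\rho)$, a traceless operator $Y$ satisfies $\lind_t[Y]=-\tau_{\rm eq}^{-1}Y$. Hence $\lind_t^+[Y]=-\tau_{\rm eq}Y$. Since $\J_{\omega_t}[X_b]$ is traceless, Eq.~\eqref{eq:metric} gives
\begin{align}
m_{ab}=\beta\,\tau_{\rm eq}\,\Tr\bigl[X_a\J_{\omega_t}[X_b]\bigr],
\end{align}
which is $\tau_{\rm eq}\beta$ times the Kubo--Mori covariance $\mathrm{cov}^{\rm KM}_{\omega}(X_a,X_b)$. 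This metric is symmetric, so no symmetrization is needed. Equivalently, $m_{ab}\dot\lambda^a\dot\lambda^b=\tau_{\rm eq}\beta^{-1}\,\mathrm{cov}^{\rm KM}(\beta\dot H,\beta\dot H)$. This is $\tau_{\rm eq}/\beta$ times the Bogoliubov--Kubo--Mori Fisher information of the curve of Gibbs states $\omega_{\beta H_t}$ in the natural parameters $\beta H_t$, which is a Hessian metric of $\ln Z$.

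\emph{Monotonicity argument.} The key tool is a one-dimensional projection bound. Take any path from $H$ to $(\beta'/\beta)H$ and let $\rho_t=\omega_t$. Choose the scalar observable $f(t)=\Tr[\omega_t H]$, or better a function adapted to the path such as the entropy $S(\omega_t)$. By Cauchy--Schwarz in the KM inner product, $|\dot f|^2\le \mathrm{cov}^{\rm KM}(\beta\dot H_t,\beta\dot H_t)\,\mathrm{var}^{\rm KM}_{\omega_t}(A)$ for a suitable $A$. Using the entropy, $\dot S=-\Tr[\dot\omega_t\ln\omega_t]$ is obtained by pairing $\dot\omega$ with $\ln\omega_t$. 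This yields
\begin{align}
\ell\ge\sqrt{\tau_{\rm eq}/\beta}\int\frac{|\D S|}{\sqrt{\mathrm{var}(\ln\omega)}}.
\end{align}
The right-hand side is achieved exactly by the straight scaling path $H_s=s H$. Along that path, $\omega_{sH}$ is a one-parameter exponential family, so its Fisher length equals the $|\D S|/\sqrt{\mathrm{var}}$ integral.

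\emph{Main obstacle.} The difficulty is showing that the scaling path is globally geodesic, i.e. that no detour beats it. The cleaner route is to recognise that the BKM metric on the Gibbs family is the pullback of the Hessian of $\ln Z$ under $\theta=\beta H$. The Hessian metric of a convex function along $\theta$ is flat in the dual coordinates, so a projection onto the one-dimensional exponential family through $\theta_0=\beta H$ is distance-nonincreasing. The cost of this route is that proving the projection is a contraction requires care when $[H,\dot H]\neq0$.

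Either way, the quantity finally needed is only
\begin{align}
L(\beta\to\beta')\ \ge\ \sqrt{\tau_{\rm eq}/\beta}\,\int_{\beta}^{\beta'}\D b\ \sqrt{\mathrm{var}_{\omega_H(b)}(H)},
\end{align}
with equality along $H_s=(b/\beta)H$. The integrand is nonnegative, so this is strictly increasing in $\beta'$ whenever $H\not\propto\eye$. Proposition~\ref{prop:geom_effect} then excludes NEAT.

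A simpler fallback avoids the geodesic question entirely. One bounds $\ell$ below by the length of the induced curve of the single observable "energy in units of the scaled Hamiltonian", and notes that any path must traverse the values between the endpoints, since the endpoints are fixed. Monotonicity of $L$ then follows because the path to $\beta_2$ must cross the set reached by $\beta_1$. Making this rigorous requires a functional $g(\omega)$ whose value increases along the family $\omega_H(b)$ and that is $1$-Lipschitz for the BKM metric up to the same integrand. The entropy, reparametrized by the primitive of $1/\sqrt{\mathrm{var}}$, is the natural candidate; I expect verifying its Lipschitz property off the family to be the hardest step.
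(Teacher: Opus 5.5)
There is a genuine gap. Your metric computation is correct: $m_{ab}=\beta\tau_{\rm eq}\Tr[X_a\J_{\omega_t}[X_b]]$, i.e.\ $\tau_{\rm eq}/\beta$ times the Kubo--Mori Fisher information, exactly as in the paper. The problem is the inequality that all your routes aim at: $L(\beta\to\beta')\ge\sqrt{\tau_{\rm eq}/\beta}\int_\beta^{\beta'}\D b\,\sqrt{\mathrm{var}_{\omega_H(b)}(H)}$, with equality on the scaling path. This is false as soon as $H$ has three distinct eigenvalues. Its right-hand side is the length of the Gibbs curve itself, and the Gibbs curve is not a geodesic.

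On diagonal states the metric reads $\frac{4\tau_{\rm eq}}{\beta}\sum_a\dot q_a^2$ with $q_a=\sqrt{\omega_a}$. This is the round metric on the positive orthant of a sphere, whose geodesics are great circles. The Gibbs curve $q_a(b)\propto e^{-be_a/2}$ does not lie in a plane through the origin, because the exponentials are linearly independent. It is therefore strictly longer than the great-circle arc between its endpoints. For example, for $e=(0,1,2)$ the Gibbs arc from the uniform state to the ground state subtends $\int_0^1\sqrt{1+4x^2+x^4}\,/(1+x^2+x^4)\,\D x\approx1.02$ rad, while the great-circle angle is $\arccos(1/\sqrt3)\approx0.955$ rad.

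Each of your supporting routes fails for a concrete reason:
\begin{itemize}
\item The Hessian-metric argument confuses dual flatness with flatness of the Levi-Civita connection. Dual flatness refers to the $e$- and $m$-affine connections, and the scaling ray is an $e$-geodesic, not a metric geodesic. The Fisher metric has constant positive curvature, and no length-nonincreasing retraction onto the ray can exist, since it would force the curve to be minimizing.
\item The fallback cannot be completed for the same reason. A function that is $1$-Lipschitz for the metric and grows along the family at the rate of its arc length would imply the false bound.
\item The entropy Cauchy--Schwarz estimate is correct but path-dependent, because $\mathrm{var}_{\omega_t}(\ln\omega_t)$ is evaluated along the arbitrary path.
\item The crossing argument only bounds $L(\beta\to\beta_2)$ below by the distance to a level set. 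That distance is at most $L(\beta\to\beta_1)$, which is the wrong direction.
\end{itemize}

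The missing idea is to compute $L$ exactly and then prove monotonicity of that closed form, which is what the paper does. Since the endpoints commute, you may restrict to diagonal paths. The paper invokes the reduction of Kubo--Mori geodesics to classical Fisher geodesics for commuting endpoints; alternatively, dephasing in the eigenbasis of $H$ is CPTP and the Kubo--Mori metric is monotone. This gives $L(\beta\to\beta')=2\sqrt{\tau_{\rm eq}/\beta}\,\arccos f(\beta'/\beta)$, with $f(\lambda)=\sum_a\sqrt{\omega_a(1)\omega_a(\lambda)}$, $\omega(\lambda)\propto e^{-\lambda X}$ and $X=\beta H$. Set $\bar\lambda=(1+\lambda)/2$; then $\sqrt{\omega(1)\omega(\lambda)}=f(\lambda)\,\omega(\bar\lambda)$. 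Hence $f'(\lambda)=\frac{f}{2}\bigl(\langle X\rangle_{\lambda}-\langle X\rangle_{\bar\lambda}\bigr)$, which is negative for $\lambda>1$ because mean energy decreases with inverse temperature (positive heat capacity). So $L$ increases with $\beta'$, and Proposition~\ref{prop:geom_effect} excludes NEAT.
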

The proof of  Prop.~\ref{prop:single-timescale} follows from the simple expression that the thermodynamic length takes for commuting endpoints in this case, that is (cf. End Matter):
		\begin{align}
			L(\beta,\beta')  = 2\sqrt{\frac{\tau_{\rm eq}}{{\beta}}}\arccos\Tr[\sqrt{\omega_{H}(\beta) \omega_{H}(\beta')}].
		\end{align}
Then, it is sufficient to show that the fidelity $\Tr[\sqrt{\omega_{H}(\beta) \omega_{H}(\beta')}]$ monotonically decreases as the temperature difference $|\beta-\beta'|$ increases. This can be easily checked by taking $\beta'=\lambda\beta$ and evaluating the sign of $\partial_\lambda\Tr[\sqrt{\omega_{H}(\beta) \omega_{H}(\lambda\beta)}]$, which is negative for $\lambda > 1$ and positive for $\lambda < 1$ (see End Matter for details).

Prop.~\ref{prop:single-timescale} thus shows that NEAT relies on the interplay between the geometry of the manifold of thermal states and the nontrivial structure of the dynamical generators. Interestingly, nontrivial dissipator structures are also necessary for the Markovian Mpemba effect \cite{LuRaz2017nonequilibrium, avitan2026necessary}, which occurs in a different setting (see Table \ref{tab:neat-mpemba}). Moreover, achieving NEAT requires at least two control parameters $\lambda^a(t)$.
Otherwise, even if the thermalization time scales are not uniform, the only path compatible with the final condition is given by the Hamiltonian $H_t = \lambda(t) H_0$ and its cost is monotonic in the final temperature.
\begin{proposition}[Minimum number of controls]
    \label{thm:easy-no-go}
    Let $n$ be the number of independent control parameters in the Hamiltonian, i.e.,
    $H_t = \sum_{a=1}^n \lambda^a(t) X_a$ with $\{ X_a \}_a$ linearly independent.
    The NEAT effect is possible only if $n \geq 2$.
\end{proposition}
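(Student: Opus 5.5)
We prove the contrapositive: with a single control parameter, $n=1$, the function $\beta'\mapsto L(\beta\to\beta')$ is nondecreasing on $\beta'>\beta$. Proposition~\ref{prop:geom_effect} then rules out the \mname effect. (For $\Sigma_{\rm min}$ directly the same argument works, since with one parameter the optimal schedule always saturates $\Sigma\ge\ell^2$.)

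\emph{Step 1: the family of paths.} With $n=1$ we have $H_t=\lambda(t)X_1$ and $H_0=\lambda(0)X_1=H$. The endpoint condition $H_\tau=(\beta'/\beta)H$ forces $\lambda(\tau)=(\beta'/\beta)\lambda(0)$, because $X_1\neq0$. Set $\lambda_0\coloneqq\lambda(0)$; we may take $\lambda_0\neq0$, otherwise $H_\tau=H$ and there is nothing to prove. Every admissible protocol is then a curve in the one-dimensional parameter space $\mathbb{R}$ from $\lambda_0$ to $\lambda_1=(\beta'/\beta)\lambda_0$. The metric reduces to a scalar function $m(\lambda)=-\beta\Tr\bigl[X_1\lind^+_\lambda[\J_{\omega_\lambda}[X_1]]\bigr]\ge0$.

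\emph{Step 2: the one-dimensional length.} For any such curve,
\begin{align}
\ell=\int_0^\tau\D t\,\sqrt{m(\lambda)}\,|\dot\lambda|\ \ge\ \Bigl|\int_{\lambda_0}^{\lambda_1}\sqrt{m(\lambda)}\,\D\lambda\Bigr|,
\end{align}
with equality for monotone curves. Non-monotone curves, including ones that backtrack or overshoot, only add length because the integrand is nonnegative. Hence
\begin{align}
L(\beta\to\beta')=\Bigl|\int_{\lambda_0}^{(\beta'/\beta)\lambda_0}\sqrt{m(\lambda)}\,\D\lambda\Bigr|.
\end{align}

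\emph{Step 3: monotonicity.} For $\beta<\beta_1<\beta_2$, the points $\lambda_0$, $(\beta_1/\beta)\lambda_0$ and $(\beta_2/\beta)\lambda_0$ lie on the same ray from the origin, in this order, since $\beta_2/\beta>\beta_1/\beta>1$. The integration interval for $\beta_1$ is therefore contained in the one for $\beta_2$. Since $\sqrt{m}\ge0$, it follows that $L(\beta\to\beta_1)\le L(\beta\to\beta_2)$. No pair satisfies the strict reverse inequality, so by Proposition~\ref{prop:geom_effect} no \mname effect occurs.

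The main obstacle is justifying Step~2 rigorously, namely that the metric is well defined and nonnegative along the whole segment. Nonnegativity follows from $\Sigma\ge0$, i.e.\ from the positivity of $-\lind^+$ composed with the Kubo--Mori inner product for thermalizing Lindbladians. Well-definedness requires the relaxation gap to stay open along the segment, which is the standing assumption of the slow-driving expansion. A secondary point is that $\lambda$ never needs to cross $0$: the segment between $\lambda_0$ and $(\beta_2/\beta)\lambda_0$ excludes the origin, so $\omega_\lambda$ stays nondegenerate. The restriction to segments is legitimate because paths that leave them are never shorter.
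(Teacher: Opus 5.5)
Your proposal is correct and follows essentially the same route as the paper's proof: with $n=1$ the protocol is forced to be $H_t=\lambda(t)H$, the thermodynamic length reduces to the endpoint-determined integral $\int_1^{\beta'/\beta}\sqrt{m(\lambda)}\,d\lambda$, and monotonicity in $\beta'$ follows from the nonnegativity of the integrand. Your explicit bound $\int\sqrt{m}\,|\dot\lambda|\,dt\ge\bigl|\int\sqrt{m}\,\dot\lambda\,dt\bigr|$ for backtracking paths, together with your handling of the sign of $\lambda_0$, makes rigorous what the paper simply cites as the path-independence of length on a one-dimensional manifold.
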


Proposition~\ref{thm:easy-no-go} also sets constraints on the system dimension.
The number of controls $n$ is bounded above by the maximum number of independent Hermitian operators,
$n \leq d^2-1$ in general and $n \leq d-1$ for classical systems (i.e., incoherent dynamics).
These ideas are formalized in the following corollary.

\begin{corollary}[Minimum system dimension]
    \label{cor:qubitsNoGo}
    Classical systems featuring the \mname effect have dimension at least $d\geq 3$ (no \mname
    effect for bits).
    The same holds for quantum systems, under the additional assumption of  detailed-balance
    dynamics and uniqueness of the thermodynamic length geodesic (no \mname
    effect for qubits, see~\cite{supplemental}).
\end{corollary}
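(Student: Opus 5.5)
The plan is to reduce the corollary to Proposition~\ref{thm:easy-no-go}, which forbids NEAT with a single control. In every case the task is to show that, for bits and qubits, the only geodesics joining $H$ to $(\beta'/\beta)H$ are effectively one-parameter rescalings $H_t=\lambda(t)H$. For such paths the cost is monotonic in $\beta'$, and Proposition~\ref{prop:geom_effect} then excludes NEAT.

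\emph{Classical case.} I first dispose of the Hamiltonian's trace part. A term proportional to $\eye$ leaves $\omega_{H_t}$ unchanged, and $\J_\omega[\eye]=0$, so that direction is null for the metric~\eqref{eq:metric}. It can therefore be discarded, and the effective number of independent controls is at most $d-1$ (the count quoted before the corollary). For $d=2$ this leaves only the energy gap $\lambda(t)(E_1-E_0)$ as a nontrivial control. Every admissible path then lies in a one-dimensional family, with all paths projecting onto the same segment of gaps from $\Delta$ to $(\beta'/\beta)\Delta$. Proposition~\ref{thm:easy-no-go}, or directly the one-dimensional computation $L(\beta\to\beta')=\int_{1}^{\beta'/\beta}\sqrt{m(\lambda)}\,\D\lambda$ with $m\ge 0$, gives a length nondecreasing in $\beta'$. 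This yields $d\ge3$.

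\emph{Quantum qubit case.} Write $H_t=\lambda^0\eye+\vec r(t)\cdot\vec\sigma$. The identity component is again null, so the relevant parameter space is $\vec r\in\mathbb R^3$ with endpoints $\vec r_0$ and $(\beta'/\beta)\vec r_0$. Detailed balance implies the covariance of the Lindbladian under rotations about the instantaneous Hamiltonian axis, which gives a block structure of the metric. In spherical coordinates $(r,\theta,\phi)$ the metric splits into a radial (population) block $m_{rr}(r)\,\D r^2$ and an angular (coherent) block $g(r)\,(\D\theta^2+\sin^2\theta\,\D\phi^2)$, with no cross terms. The coefficients depend only on $r$, since the dissipator is tied to the eigenbasis of $H_t$ and, for a fixed bath, is rotationally covariant. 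Any path $r(t),\hat n(t)$ with fixed $\hat n$ at both ends has length at least $\int\sqrt{m_{rr}}\,|\dot r|\,\D t$, with equality only when $\hat n$ is constant. The radial reflection of the path is also admissible and no longer. So a geodesic lies on the ray $\hat n=\hat n_0$ with $r$ monotone, i.e.\ it has the form $H_t=\lambda(t)H_0$. Uniqueness of the geodesic is invoked to exclude degenerate cases where, for instance, $m_{rr}$ vanishes on some interval and off-axis geodesics become equivalent. With that hypothesis, the geodesic is the rescaling path, and $L(\beta\to\beta')=\int_1^{\beta'/\beta}\sqrt{m_{rr}}\,r_0\,\D\lambda$ is monotone in $\beta'$. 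Proposition~\ref{prop:geom_effect} then gives no NEAT.

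The main obstacle is establishing the block-diagonal, $\hat n$-independent form of the qubit metric from detailed balance alone. I would first show that, under detailed balance, $\lind_t^+\circ\J_{\omega_t}$ maps diagonal operators (in the $H_t$ eigenbasis) to diagonal ones and off-diagonal ones to off-diagonal ones. This kills the cross terms in~\eqref{eq:metric}. Rotating the frame then handles the dependence on direction. If full rotational covariance is not available, I would fall back on the weaker statement that the angular block is positive semidefinite. That still gives $\ell\ge\int\sqrt{m_{rr}}|\dot r|$, which suffices together with uniqueness.
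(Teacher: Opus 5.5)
Your classical argument is fine and is the paper's own: the $(d-1)$-dimensional count fed into Proposition~\ref{thm:easy-no-go}. Discarding the null direction $\eye$ explicitly is a welcome bit of care.

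The quantum part has a genuine gap, at exactly the step you flag. Detailed balance only makes the generator at $H=r\,\hat n\cdot\vec\sigma$ covariant under rotations \emph{about} $\hat n$. That gives the radial/angular block structure at each point, but says nothing about how the generator at $r\hat n$ compares with the one at $r\hat n'$. The claim that $m_{rr}$ and $g$ depend on $r$ alone is the separate $H$-covariance property $\lind_{UHU^\dagger}[X]=U\lind_H[U^\dagger XU]U^\dagger$. The corollary does not assume it, and a ``fixed bath'' does not provide it. For a Davies generator with fixed coupling $\vec a\cdot\vec\sigma$, the transition rates scale as $|\vec a|^2-(\vec a\cdot\hat n)^2$. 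For $\vec a=(1,0,1)$ and $\hat n_0=\hat z$, this has a nonzero first derivative in $\hat n$.

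Your fallback does not close the gap. With $m_{rr}=m_{rr}(r,\hat n)$, the bound $\ell\ge\int\sqrt{m_{rr}(r,\hat n(t))}\,|\dot r|\,\D t$ no longer dominates the ray's length $\int\sqrt{m_{rr}(r,\hat n_0)}\,\D r$. Tilting $\hat n$ by a small angle $\epsilon$ in the interior of the path changes the radial contribution at order $\epsilon$, with a sign you can choose. The angular (coherence) contribution is only of order $\epsilon^2$. So the ray need not even be a critical point of the length. The paper itself describes this mechanism when discussing $H$-covariance: rotate into a fast basis, change the spectrum there, rotate back. Uniqueness cannot rescue a pointwise comparison. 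The role you give it, excluding a vanishing $m_{rr}$, is already covered by positive definiteness.

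The paper uses the two hypotheses differently. Detailed balance splits the dissipation into nonnegative population and coherence terms, from which diagonal driving is taken to be a local minimizer, citing earlier work. Uniqueness of the geodesic then promotes that local minimizer to the global one. Only after this does the qubit problem reduce to one classical parameter, and Proposition~\ref{thm:easy-no-go} applies.

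The order-$\epsilon$ effect above means that this local-minimality step itself needs the population block to be stationary under rotations of the eigenbasis along the diagonal path. $H$-covariance guarantees that. Under $H$-covariance your argument is complete and does not need the uniqueness hypothesis at all. The metric is then $SO(3)$-invariant, hence of the form $m_{rr}(r)\,\D r^2+g(r)\,|\D\hat n|^2$, and the ray is a global minimizer. As written, though, you have proved the qubit no-go under $H$-covariance, not under the corollary's stated hypotheses of detailed balance plus uniqueness.
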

Notice that Prop.~\ref{thm:easy-no-go} and Cor.~\ref{cor:qubitsNoGo} are optimal: indeed, we present below an explicit example of the NEAT effect for $d-1 = n = 2$.

We now characterize the effect locally using the thermodynamic metric
$m_{ab}(\boldsymbol{\lambda})$, assumed smooth and positive definite.
By the mean value theorem, a distance inversion occurs exactly when
$\partial_{\beta'}L(\beta\to\beta')<0$ at an interior cooling target. This gives the following geometric criterion (see SM~\cite{supplemental} for details).

\begin{proposition}[Geometric condition for NEAT]
\label{prop:neat-geometric}
Assume that the minimizing geodesic $\varsigma$ from the initial state to every intermediate target is unique. Then \mname occurs if and only if, at some intermediate target $\beta_*$, the scalar product $(\cdot,\cdot)_m$ in metric $m$ between the tangent $\mathbf{v}_\varsigma(\beta_*)$ of the geodesic ending at $\omega(\beta_*)$ and the tangent $\mathbf{v}_\omega(\beta_*)$ of the Gibbs curve is negative, i.e.
\begin{equation}
    \exists\ \beta_*\ \text{ s.t. }\
    \big( \mathbf{v}_\varsigma(\beta_*),\mathbf{v}_\omega(\beta_*) \big)_{m} < 0.
    \label{eq:neat-endpoint-pairing}
\end{equation}
\end{proposition}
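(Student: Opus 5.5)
The plan is to reduce the statement to a first-variation formula for the distance function along the Gibbs curve, and then combine it with Proposition~\ref{prop:geom_effect}. Write $f(\beta') \coloneqq L(\beta\to\beta')$ for $\beta'>\beta$. Here $\omega(\beta')$ denotes the Gibbs curve $\beta'\mapsto \omega_{H}(\beta')$, realized in control space as $\boldsymbol{\lambda}(\beta') = (\beta'/\beta)\boldsymbol{\lambda}(0)$, with tangent $\mathbf{v}_\omega(\beta') = \partial_{\beta'}\boldsymbol{\lambda}$. By Proposition~\ref{prop:geom_effect}, NEAT holds if and only if $f$ fails to be nondecreasing on $(\beta,\infty)$.

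The first step is to show that $f$ is continuous, and differentiable wherever the minimizing geodesic is unique. For this I invoke the standard first-variation formula of Riemannian geometry: if $\varsigma$ is the unique minimizing geodesic from $p=\boldsymbol{\lambda}(0)$ to $q=\boldsymbol{\lambda}(\beta_*)$, parametrized by arc length with unit tangent $\mathbf{v}_\varsigma(\beta_*)$ at $q$, then $d(p,\cdot)$ is differentiable at $q$ with gradient $\mathbf{v}_\varsigma(\beta_*)$. Consequently
\begin{equation}
f'(\beta_*) = \big(\mathbf{v}_\varsigma(\beta_*),\mathbf{v}_\omega(\beta_*)\big)_m .
\end{equation}
To justify differentiability, I would use that uniqueness of the minimizer at every intermediate target, together with smoothness and positive definiteness of $m$ and a compactness argument (minimizers from nearby targets converge to the unique one), gives continuity of $q\mapsto \mathbf{v}_\varsigma$ along the curve. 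Upper and lower difference quotients then both converge to the pairing: the upper bound comes from comparing with $\varsigma$ extended by a short segment of the Gibbs curve, and the lower bound from the first variation along the nearby minimizers. I expect this to be the main obstacle, because one must exclude $q$ being conjugate to $p$ along $\varsigma$, or else argue with one-sided derivatives only. What I would try first is to note that the one-sided inequality $\limsup \le$ pairing already holds without any nonconjugacy assumption, and that a matching $\liminf$ follows from the limiting-minimizer argument using uniqueness alone.

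The equivalence then follows in two directions. If the pairing is negative at some $\beta_*$, then $f'(\beta_*)<0$, so for small $\epsilon$ we have $f(\beta_*-\epsilon)>f(\beta_*+\epsilon)$, and setting $\beta_1=\beta_*-\epsilon$, $\beta_2=\beta_*+\epsilon$ gives NEAT. Conversely, if $f(\beta_1)>f(\beta_2)$ with $\beta_2>\beta_1$, then since $f$ is differentiable on $[\beta_1,\beta_2]$, the mean value theorem yields $\beta_*\in(\beta_1,\beta_2)$ with $f'(\beta_*)=(f(\beta_2)-f(\beta_1))/(\beta_2-\beta_1)<0$, and hence a negative pairing. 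Should only one-sided derivatives be available, I would replace the mean value theorem by the fact that a continuous function whose upper Dini derivative is nonnegative everywhere is nondecreasing; this gives the converse direction through the $\limsup$ bound.
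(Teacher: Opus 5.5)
Your proposal is correct and follows the paper's proof: the paper also identifies $\partial_{\beta'}L(\beta\to\beta')$ with $(\mathbf{v}_\varsigma,\mathbf{v}_\omega)_m$ through the gradient of the distance function and the chain rule, and then concludes with a mean-value-theorem lemma. The paper simply asserts smoothness off the cut locus, while you rightly note that uniqueness of the minimizer already gives differentiability even at a conjugate endpoint.

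There are two slips in your side remarks. First, appending a Gibbs segment to $\varsigma$ only gives $\limsup\le|\mathbf{v}_\omega|_m$ for the right difference quotient. The correct upper bound comes from a variation of $\varsigma$ whose endpoint slides along the Gibbs curve. Second, in the Dini fallback the bound $D^+f\le(\mathbf{v}_\varsigma,\mathbf{v}_\omega)_m$ cannot establish $D^+f\ge0$. Instead, apply the same variational inequality with negative increments, which gives $D_-f\ge(\mathbf{v}_\varsigma,\mathbf{v}_\omega)_m$; that is enough for monotonicity of the continuous $f$.
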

Thus, cooling the system further reduces the optimal cost when the two tangent vectors make an obtuse angle, as illustrated in Fig.~\ref{fig:qutrit-combined}(a) for a qutrit.

\begin{figure*}[t]
    \centering
    \begin{minipage}[t]{0.325\textwidth}
        \vspace{0pt}
        \makebox[\linewidth][l]{\textbf{(a)}}\\[2pt]
        \begin{minipage}[c][0.95\linewidth][c]{\linewidth}
            \centering
            \includegraphics[width=\linewidth]{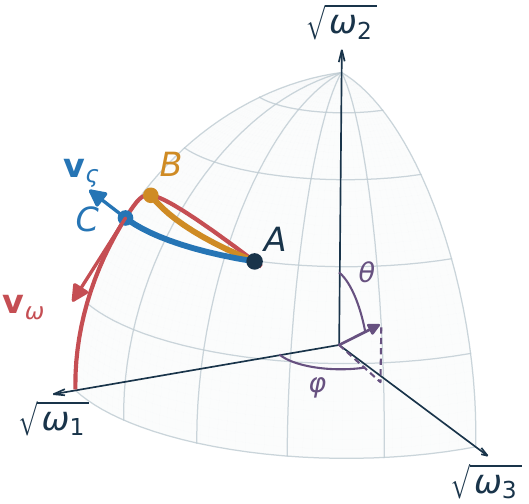}
        \end{minipage}
    \end{minipage}\hfill
    \begin{minipage}[t]{0.325\textwidth}
        \vspace{0pt}
        \makebox[\linewidth][l]{\textbf{(b)}}\\[2pt]
        \begin{minipage}[c][0.95\linewidth][c]{\linewidth}
            \centering
            \includegraphics[width=\linewidth]{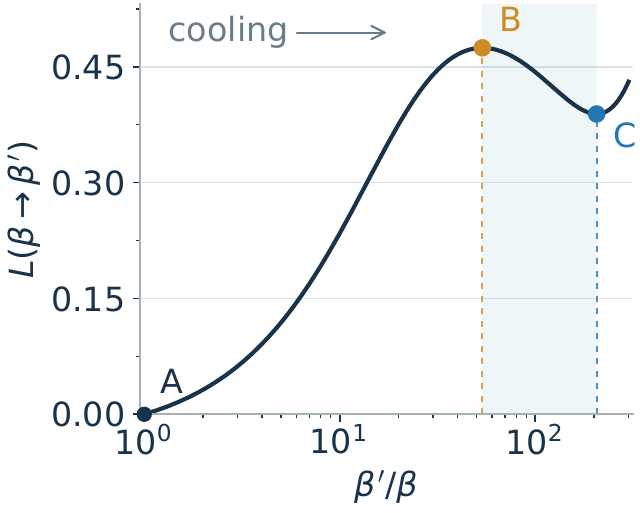}
        \end{minipage}
    \end{minipage}\hfill
    \begin{minipage}[t]{0.325\textwidth}
        \vspace{0pt}
        \makebox[\linewidth][l]{\textbf{(c)}}\\[2pt]
        \begin{minipage}[c][0.95\linewidth][c]{\linewidth}
            \centering
            \includegraphics[width=\linewidth]{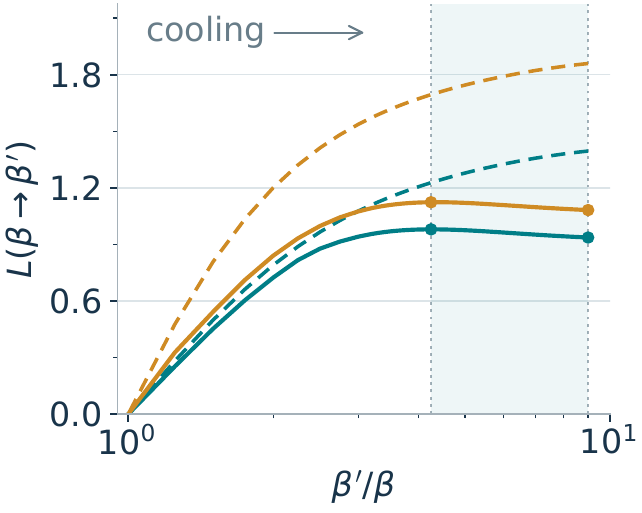}
        \end{minipage}
    \end{minipage}
    \caption{{\bf \mname in incoherent systems}.
    (a,b)  Qutrit with $H_0=\diag(0,0.03,1)$, $\beta=0.1$, $r_\theta=5$, $r_\varphi=100$.
    (a) Gibbs cooling curve (red) and optimal paths from $A$ to $B$ (amber) and $C$ (blue).
    Arrows show $\mathbf{v}_\varsigma$ and $\mathbf{v}_\omega$ at $C$, rescaled for visibility.
    The purple vector illustrates the spherical coordinates.
    (b) Exact distance along the Gibbs curve, Eq.~\eqref{eq:qutrit-endpoint-distance}; shading marks its decrease between $B$ and $C$.
    (c) Birth--death cooling for $d=3$ (light-blue) and $d=4$ (amber): structured bath (solid)
    and Ohmic control (dashed). Dots mark targets $\beta'/\beta=4.25$ and $\beta'/\beta=9$.
    In (b,c), cooling proceeds right.
    For (c), $\hbar=k_{\rm B}=\beta=1$,
    $H_3=\diag(0,0.72,2.40)$, $(c_1,c_2)=(1,0.30)$;
    $H_4=\diag(0,0.72,2.40,2.70)$, $(c_1,c_2,c_3)=(1,0.30,1)$;
    the bath in Eq.~\eqref{eq:bd-main-spectrum} has $C=10$, $\nu_0=1$,
    $\gamma_{\rm B}=0.30$, $\nu_c=10$ ($C=0$ for the control).}
    \label{fig:qutrit-combined}
    \label{fig:bd-length-curves}
\end{figure*}

In the following,
we characterize the metric $m$ (Eq.~\eqref{eq:metric}) for a wide class of incoherent processes by considering protocols in which we have control over the Hamiltonian eigenvalues but not its eigenbasis, i.e. $H_t = \sum_{a=1}^{d} e_a(t) \ketbra{a}{a}$,
$d$ being the system dimension.
Due to this constraint, the Hamiltonian dynamics cannot generate
coherence in the system. Assuming that the
thermalization process satisfies detailed balance, it follows that starting from a diagonal
(Gibbs) state the dynamics will stay diagonal at all times.
This in turn allows us to describe the system with a $d$-dimensional population vector $\boldsymbol{\rho} = (p_1,\dots,p_d)^{\mathsf T}$ (instead of a density matrix $\rho$), and the corresponding
instantaneous Gibbs populations $\boldsymbol\omega_t$.

\vspace{1cm}

\textbf{\emph{Minimal NEAT effect ($d=3$).---}}
We first study the NEAT effect in an incoherent qutrit
with Hamiltonian $H_t = \diag (0, e_2(t), e_3(t))$.
We write the Gibbs population vector $\boldsymbol{\omega}=(\omega_1,\omega_2,\omega_3)^{\mathsf T}$
in spherical coordinates $(\theta,\varphi)$, with 
$\boldsymbol{\omega} = (\sin^2 \theta \cos^2 \varphi,\cos^2 \theta, \sin^2 \theta \sin^2 \varphi)$, as shown in Fig.~\ref{fig:qutrit-combined}(a).
We now consider a thermalization process in which the Lindbladian is such that, under these coordinates, the metric is diagonal and takes the form
\begin{align}
    m_{\theta\theta} = \frac{4}{\beta r_\theta(\theta,\varphi)},
    \quad
    m_{\varphi\varphi} = \frac{4\sin^2\theta}{\beta r_\varphi(\theta,\varphi)},
    \quad m_{\theta\varphi}=0;
\end{align}
with $r_\theta, r_\varphi$ relaxation rates
in the two orthogonal directions.
For constant rates $r_\varphi\geq r_\theta>0$, the metric is such that the manifold is isomorphic to
a spherical wedge, and its geodesics
can be evaluated
exactly (see SM~\cite{supplemental}, Sec.~IV B). For any initial and final Gibbs population vectors $\boldsymbol{\omega}_A, \boldsymbol{\omega}_B$
we have
\begin{equation}
    \label{eq:qutrit-endpoint-distance}
    \begin{aligned}
    L(\boldsymbol{\omega}_A,\boldsymbol{\omega}_B)
    =&\frac2{\sqrt{\beta r_\theta}}\arccos\!\Big[\cos\theta_A\cos\theta_B
    \\
    &+\sin\theta_A\sin\theta_B
    \cos\!\Big(\sqrt{\frac{r_\theta}{r_\varphi}}
    (\varphi_B-\varphi_A)\Big)\Big].
    \end{aligned}
\end{equation}
For suitable choices of the initial temperature and energies,
this model displays the NEAT effect.
Specifically, for some target final temperatures, cooling the
system as much as ten times more can lead to less dissipated
energy, as shown in Fig.~\ref{fig:qutrit-combined}(b) (for the plotted
parameters, $\beta'/\beta = 25$ and $\beta'/\beta = 250$ give
approximately equal lengths).
This happens because for $r_\theta/r_\varphi\ll1$, changes in $\varphi$ contribute little to the thermodynamic length, allowing a colder target to be reached with less dissipation.

\textbf{\emph{NEAT in birth--death dynamics with bosonic baths.---}} The latter qutrit example shows NEAT for a physically valid Lindbladian, which, however, lacks a microscopic description.
We now consider weak-coupling thermalization described by Davies generators
at inverse bath temperature $\beta$~\cite{davies1974markovian,alicki1976detailed},
with transitions only between adjacent energy levels and rates satisfying thermal
detailed balance. The metric can be evaluated explicitly, and the associated
geodesic problem solved numerically. Writing $A_a:=\ket{a}\!\bra{a+1}$,
the dissipative part of the generator
$\lind_t[\rho] = -i[H_t,\rho] + \lind_t^{\rm diss}[\rho]$ is
\begin{align}
    \lind_t^{\rm diss}[\rho]
    =\sum_{a=1}^{d-1}\left(
    \gamma_a^\downarrow\mathcal{D}_{A_a}[\rho]
    +\gamma_a^\uparrow\mathcal{D}_{A_a^\dagger}[\rho]
    \right),
    \label{eq:bd-main-ladder}
\end{align}
where $\mathcal D_A[\rho]:=A\rho A^\dagger-\tfrac12\{A^\dagger A,\rho\}$,
and $d$ is the system dimension.
On diagonal states this is a reversible birth--death chain~\cite{crooks2018drazin}.

Denote the instantaneous Gibbs populations of the system
and their cumulative populations by
\begin{align}
    \omega_a(t) \coloneqq \frac{e^{-\beta e_a(t)}}{\sum_{b=1}^{d} e^{-\beta e_b(t)}},
    \qquad
    \Omega_a(t) \coloneqq \sum_{b=1}^{a} \omega_b(t).
    \label{eq:bd-main-cumulants}
\end{align}
Setting $e_1(t)=0$ without loss of generality,
in the cumulative coordinates $\boldsymbol\Omega=(\Omega_1,\ldots,\Omega_{d-1})^{\mathsf T}$,
the metric is diagonal (see SM~\cite{supplemental}, Sec.~IV C for the full derivation):
\begin{align}
    m_{ab}(\boldsymbol\Omega)
    =\frac{\delta_{ab}}{\beta w_a(\boldsymbol\Omega)},
    \qquad
    w_a(\boldsymbol{\Omega})
    = \gamma_a^\downarrow \omega_{a+1}
    = \gamma_a^\uparrow \omega_{a},
    \label{eq:bd-main-cdf-metric}
\end{align}
where $a,b=1,\ldots,d-1$, $\Omega_d=1$, and $\delta_{ab}$ is the Kronecker delta.
For a bosonic bath with adjacent-transition matrix elements $c_a$, spectral density
$\mathcal{J}(\nu)$, and occupation $n_\beta(\nu)=(e^{\beta\nu}-1)^{-1}$, one has transition
rates (with $\hbar=1$)
\begin{equation}
    \begin{split}
        &\gamma_a^\downarrow(t)
        = |c_a|^2 \gamma_{\rm bath}^\downarrow\left(e_{a+1}(t)-e_a(t)\right), \\
        &\gamma_{\rm bath}^\downarrow (\nu)
        = \mathcal{J}(\nu)\big[n_\beta(\nu)+1\big],
        \label{eq:bd-main-rates-down}
    \end{split}
\end{equation}
while $\gamma_a^\uparrow$ follows from the detailed balance condition.
We use an exponentially cut off Ohmic background plus a smooth underdamped Brownian mode~\cite{jha2024unraveling},
\begin{align}
    \mathcal{J}(\nu)
    ={}&\nu e^{-\nu/\nu_c}
    +C e^{-\nu_0/\nu_c}
    \frac{\gamma_{\rm B}^{2}\nu_0^{2}\nu}
    {(\nu_0^2-\nu^2)^2+\gamma_{\rm B}^{2}\nu^2},
    \label{eq:bd-main-spectrum}
\end{align}
where $\gamma_B$, $\nu_0$, $\nu_c$ and $C$ are parameters of the model.
We then evaluate the thermodynamic length numerically as a function of the
final temperature for three- and four-level systems.
Figure~\ref{fig:bd-length-curves}(c) shows a decrease
in the thermodynamic length and hence the NEAT effect near
$\beta'/\beta\simeq4.35$ for both dimensions.
Removing the damped mode ($C=0$), with all other parameters unchanged,
yields numerically monotone control curves for the purely Ohmic case on the same interval,
suggesting that anisotropy in the bath is necessary for this class
of master equations.

\textbf{\emph{Consequences for thermal machines.---}}
Anomalous Mpemba relaxation can enhance Otto-engine performance~\cite{lin2022power,liu2024speeding}.
NEAT offers a complementary advantage on the driven isothermal strokes
of a Carnot engine: it can increase the entropy swing $\Delta S$ while
reducing the thermodynamic length that governs dissipation. As an example,
for fixed bath temperatures $T_h>T_c$ and baths with matched power-law  spectral densities $\mathcal{J}(\nu)\propto \nu^{\alpha}$, the optimized low-dissipation power takes the form~\cite{abiuso2020geometric}
\begin{align}
\label{eq:Pmax}
    P_{\rm max}=\frac{\Delta S^2}{L^2_h}\frac{(T_h-T_c)^2}{4(1+(T_c/T_h)^\frac{1-\alpha}{2})^2},
\end{align}
where $L_h$ is the hot-isotherm length normalized by entropy
production. The isothermal protocol thus enters through
$(\Delta S/L_h)^2$: a NEAT effect improves both numerator
and denominator, allowing a larger thermodynamic transformation at lower
dissipative cost.
This is immediately reflected in a larger $P_{\rm max}$ in~\eqref{eq:Pmax}, and, more generally, in improvements to the several figures of merit that increase with $|\Delta S|/L$~\cite{abiuso2020geometric,abiuso2020optimal}.
An explicit Carnot cycle construction and
further details can be found in SM~\cite{supplemental}.

\textbf{\emph{Discussion.---}} In this work we uncovered a new feature of systems with non-uniform thermalization dynamics, that is a (near-equilibrium) anomalous tradeoff (NEAT): the minimum attainable value of $W_{\rm diss}\tau$ can decrease while reaching states that are further in temperature from the initial state.
That is, a colder (or hotter) target can be reached with smaller dissipation in the same time, or in a shorter time with the same dissipation.

In the slow-driving regime, the NEAT effect reflects a global property of the thermodynamic geometry induced by the relaxation dynamics over the accessible control manifold, namely, the fact that the geodesic thermodynamic length is not always monotonic in the temperature difference.
Control freedom and relaxation dynamics must work together to realize this anomaly: sufficient controls allow driven paths to explore directions with different dissipation costs. The structured-bath examples give this mechanism a concrete physical realization. Spectral resonances enhance selected transitions, making motion through parts of the control manifold less dissipative and enabling a decrease in the optimal cooling cost.

By showing that colder targets need not carry a greater
dissipative cost, NEAT is (locally) at odds with the intuition offered by the third law of thermodynamics on cooling~\cite{masanes2017general,taranto2023landauer}, opens a route to improved thermal machines
and motivates a reassessment of fidelity--dissipation--time
tradeoffs in information erasure~\cite{proesmans2020finitetime,scandi2022minimally}.

\par\vspace{\baselineskip}
\emph{Acknowledgements.---}
P.A. acknowledges funding from the Austrian Science Fund (FWF) projects 10.55776/I6004 and 10.55776/ESP2889224. M.S. acknowledges support from the Spanish Agencia Estatal de Investigación through the grants “PID2023-150847NA-I00” and “PCI2024-153448”, funded by MICIU/AEI/10.13039/501100011033 and co-funded by the EU.
V.C. acknowledges
financial support from the Italian Ministry of University and Research, through the PNRR MUR Project No.
PE0000023-NQSTI.
This publication has also been funded within the framework of the R\&D\&I Project CEX2025-001574-S, funded by MI-CIU/AEI/10.13039/501100011033. The research presented in this publication falls within the research line “Quantum Matter and Fields”. This project was funded within the QuantERA II Programme, which received funding from the European Union's Horizon 2020 research and innovation programme under Grant Agreement No.~101017733.

\par\vspace{\baselineskip}
\emph{Data availability.---}
The code and numerical data supporting this work are available in the
project repository at \url{https://github.com/TheQuantumJoker/neat-cooling}.

\putunitbibliography

\clearpage
\onecolumngrid
\section*{End Matter}

\subsection*{Proof of Proposition~\ref{prop:single-timescale}}
\begin{proof}
    For the simple Lindbladian~\eqref{eq:simple_lind}, its Drazin inverse is readily found to be
    \begin{equation}
        \lind_t[X]=\tau_{\rm eq}^{-1}(\omega_t\Tr X-X),\qquad
        \lind_t^+[X]=-\tau_{\rm eq}(X-\omega_t\Tr X).
    \end{equation}
    The resulting thermodynamic length geometry~\eqref{eq:metric} simplifies as
    \begin{align}
        \dot{W}_{\rm diss}
        = \beta\tau_{\rm eq}\Tr \left[\dot{H}_t[ \J_{\omega_t} [\dot{H}_t] ]\right]
        =\frac{\tau_{\rm eq}}{\beta}\Tr \left[\dot{\omega}_t[ \J^{-1}_{\omega_t} [\dot{\omega}_t] ]\right].
    \end{align}
    The latter expression is, up to the $\frac{\tau_{\rm eq}}{\beta}$ factor, the well-known Kubo-Mori-Bogoliubov quantum Fisher information quadratic form~\cite{scandi2025quantum,scandi2019thermodynamiclength}.
    While the geodesics associated with this metric are not known in full generality, they are known to reduce to the classical Fisher information geodesics for commuting endpoints~\cite{scandi2025quantum}, as is the case for our temperature-varying protocols $[\omega_H(\beta),\omega_H(\beta')]=0$. That is, without loss of generality, one can in this case restrict to fully incoherent protocols acting on the populations $\omega_a(t)$ defined in Eq.~\eqref{eq:bd-main-cumulants}:
    \begin{equation}
        W_{\rm diss} = \frac{\tau_{\rm eq}}{\beta} \sum_a\int_{0}^\tau \D t\,\frac{\dot{\omega}_a(t)^2}{\omega_a(t)} = \frac{4\tau_{\rm eq}}{\beta} \sum_a\int_{0}^\tau \D t\,\dot{q}_a(t)^2,\label{eq:11}
    \end{equation}
    where $q_a(t):=\sqrt{\omega_a(t)}$ and $\sum_a q_a(t)^2=1$. This is immediately recognizable as the squared Euclidean distance in the space of square roots. The geodesics are great circles on the positive spherical orthant (i.e., a $d$-dimensional octant)
    of radius $2\sqrt{\tau_{\rm eq}/\beta}$. For the fixed endpoint Hamiltonian $H=\sum_a e_a\ketbra{a}$, write $\omega_a(b):=e^{-b e_a}/\sum_c e^{-b e_c}$ for its Gibbs populations at inverse temperature $b$. The thermodynamic length is then
    \begin{equation}
        L^2(\beta\to\beta')=\frac{4\tau_{\rm eq}}\beta\left(\arccos\Tr[\sqrt{\omega_{H}(\beta)\,\omega_{H}(\beta')}]\right)^2 = \frac{4\tau_{\rm eq}}\beta\left(\arccos\sum_a\sqrt{\omega_a(\beta)\omega_a(\beta')}\right)^2.
    \end{equation}
    Then, in order to prove Proposition~\ref{prop:single-timescale}, it is sufficient to show that $\Tr[\sqrt{\omega_{H}(\beta)\,\omega_{H}(\beta')}]$ monotonically decreases as the temperature difference $|\beta-\beta'|$ increases. To simplify the notation, we set $X:=\beta H$ and define the Gibbs family and its overlap with the initial state as
    \begin{align}
        \omega(\lambda):=\frac{e^{-\lambda X}}{\Tr e^{-\lambda X}}=\omega_H(\lambda\beta),
        \qquad f(\lambda):=\Tr\!\left[\sqrt{\omega(1)\omega(\lambda)}\right].
    \end{align}
    The target corresponds to $\lambda=\beta'/\beta$. Without loss of generality, we restrict to cooling, $\lambda\geq1$. Since $[\omega(1),\omega(\lambda)]=0$, the derivative of the overlap takes the simple form
    \begin{align}
        f'(\lambda)=\frac{f(\lambda)}{2}\left(\Tr[X\omega(\lambda)]-\frac{\Tr[X\sqrt{\omega(1)\omega(\lambda)}]}{f(\lambda)}\right).
        \label{eq:derOverlap}
    \end{align}
    Setting $\bar\lambda:=(1+\lambda)/2$, we have $\sqrt{\omega(1)\omega(\lambda)}\propto e^{-\bar\lambda X}\propto\omega(\bar\lambda)$. Thus, Eq.~\eqref{eq:derOverlap} has the same sign as the difference in average energy between $\omega(\lambda)$ and $\omega(\bar\lambda)$. By the positivity of heat capacity, $f'(\lambda)<0$ for $\lambda>1$, and conversely $f'(\lambda)>0$ for $\lambda<1$. This proves that $\Sigma_{\rm min}(\beta\to\beta')$ is monotonic in the temperature difference.
\end{proof}

\subsection*{Proof of Proposition~\ref{thm:easy-no-go}}
\begin{proof}
    Suppose that $n=1$. Since $H(0)= H$ and $H(\tau) = (\beta'/\beta) H$,  without loss of generality the driving takes the form $H(t) = \lambda(t) H$, with $\lambda(0) = 1$. In this context, as the manifold is one-dimensional, the geodesic length takes the form
    \begin{align}
        L(\beta\rightarrow\beta') =\int_{1}^{\beta' / \beta} \D \lambda \;\sqrt{ m(\lambda)},
    \end{align}
	where we introduced $m(\lambda(s)) := \Tr\left[H(s)\,\J^{-1}_{\omega_s}[(-\beta^{-1}\lind_{s}^+)[H(s)]]\right]$.
	This is a restatement of a well-known fact: in one-dimensional manifolds $L(\beta\rightarrow\beta')$ does not depend on the specific path chosen but only on the endpoint $\beta'/\beta$, and therefore it is clearly monotonic in $\beta'$ (being the integral of a positive function).

\end{proof}

\end{bibunit}

\supplementtoctrue
\setbibunitid{supp}
\begin{bibunit}[apsrev4-2]
\supplementbibnotes
    \clearpage
    \onecolumngrid
    \setcounter{equation}{0}
    \setcounter{figure}{0}
    \setcounter{table}{0}
    \setcounter{section}{0}
    \setcounter{subsection}{0}
    \setcounter{proposition}{0}
    \setcounter{corollary}{0}
    \setcounter{lem}{0}
    \setcounter{secnumdepth}{2}
    \renewcommand{\theequation}{S\arabic{equation}}
    \renewcommand{\thefigure}{S\arabic{figure}}
    \renewcommand{\thetable}{S\arabic{table}}
    \renewcommand{\theHequation}{supp.\arabic{equation}}
    \renewcommand{\theHfigure}{supp.\arabic{figure}}
    \renewcommand{\theHtable}{supp.\arabic{table}}
    \renewcommand{\theHsection}{supp.\arabic{section}}
    \renewcommand{\theHsubsection}{\theHsection.\arabic{subsection}}
    \renewcommand{\theHproposition}{supp.\arabic{proposition}}
    \renewcommand{\theHcorollary}{supp.\arabic{corollary}}
    \renewcommand{\theHlem}{supp.\arabic{lem}}
    \begin{center}
        {\large\bfseries Supplemental Material}\\[0.5em]
        {\large\bfseries Cooling more while dissipating less: 
        anomalous tradeoffs from thermodynamic geometry}
    \end{center}

\tableofcontents
\clearpage

\section{Dissipated work in the slow-driving limit and thermodynamic length}
\label{sm:app:slow-driving}

We use the notation from the main text: $t\in[0,\tau]$ is physical time,
$\beta$ is the fixed bath inverse temperature, and $\beta'$ labels the target.
Dots denote time derivatives. Consider a smooth control profile traversed in time $\tau$, with a nonvanishing relaxation gap and no fast segments.
In the slow-driving limit (i.e., when $\tau$ is much larger than any thermalization timescale of the system), we can approximate the state at each time as~\cite{cavina2017slowdynamics}:
\begin{equation}
\rho_t=\omega_t+\delta\rho_t+O(\tau^{-2}),\qquad
\lind_t[\delta\rho_t]=\dot\omega_t,\qquad \Tr\delta\rho_t=0.
\end{equation}
Here $\rho_t$ is the actual state,
$\omega_t=e^{-\beta H_t}/\Tr e^{-\beta H_t}$ its instantaneous Gibbs state,
and $\delta\rho_t$ the correction at leading order. In the perturbative expansion above, we ignore errors exponentially small in $\tau$ that arise from the early stages of the dynamics. We introduce the Drazin inverse of $\lind_t$ to be the operator~\cite{crooks2018drazin}
\begin{equation}
\lind_t^+[X]:=\int_0^\infty\!\D u\,e^{u\lind_t}
\bigl(\omega_t\Tr X-X\bigr).
\label{sm:eq:app-drazin}
\end{equation}
This satisfies $\lind_t^+\lind_t^{\phantom.}[X] = \lind_t^{\phantom.}\lind_t^+[X] =X $ and $\lind_t^+[\omega_t]=0$. Thus, we can rewrite the correction to $\rho_t$ as
\begin{equation}
\delta\rho_t=\lind_t^+[\dot\omega_t] = -\beta \lind_t^+[\J_{\omega_t}[\dot H_t]],
\label{sm:eq:app-lag}
\end{equation}
where we introduced the centred Kubo--Mori map~\cite{petz1993bogoliubov,scandi2019thermodynamiclength}
\begin{equation}
\J_\omega[X]\coloneqq\int_0^1\!\D u\,\omega^{1-u}
\bigl(X-\Tr[\omega X]\eye\bigr)\omega^u,
\label{sm:eq:app-kubo-mori}
\end{equation}
which is obtained from the derivative of the Gibbs state, $\dot\omega_t=-\beta\J_{\omega_t}[\dot H_t]$.
This is self-adjoint with respect to the Hilbert--Schmidt inner product and invertible. Then, the dissipated work defined in the main text (see Eq.~(2)) becomes~\cite{scandi2019thermodynamiclength}
\begin{align}
W_{\rm diss}
&=\int_0^\tau\!\D t\,\Tr\!\left[\dot H_t\lind_t^+[\dot\omega_t]\right]
+O(\tau^{-2}) =-\beta\int_0^\tau\!\D t\,\Tr\!\left[\dot H_t\lind_t^+
[\J_{\omega_t}[\dot H_t]]\right]+O(\tau^{-2})\nonumber\\
&=-\frac1\beta\int_0^\tau\!\D t\,\Tr\!\left[
\dot\omega_t\J_{\omega_t}^{-1}[\lind_t^+[\dot\omega_t]]\right]
+O(\tau^{-2}),
\label{sm:eq:app-work-forms}
\end{align}
where in the second line we equivalently expressed the dissipation in terms of the derivative of the Gibbs state.
For $H_t=\lambda^a(t)X_a$, with fixed control operators and summation over
repeated indices, this yields the work-normalized metric (see Eqs.~(6) and~(7) in the main text)
\begin{equation}
m_{ab}:=-\frac\beta2\left\{
\Tr[X_a\lind_t^+[\J_{\omega_t}[X_b]]]
+\Tr[X_b\lind_t^+[\J_{\omega_t}[X_a]]]\right\},
\qquad
\Sigma [ \{\boldsymbol{\lambda}(t)\}_{t}]
    = \tau \int_0^\tau \D t \, m_{ab}(\boldsymbol{\lambda}) \dot{\lambda}^a(t) \dot{\lambda}^b(t).
\label{sm:eq:app-control-metric}
\end{equation}
Consequently $W_{\rm diss}=\Sigma [ \{\boldsymbol{\lambda}(t)\}_{t}]/\tau+O(\tau^{-2})$. Notice that $m$ is a  metric~\cite{scandi2019thermodynamiclength}. Then, setting $v(t)=\sqrt{m_{ab}\dot\lambda^a\dot\lambda^b}$, the Cauchy--Schwarz inequality gives
\begin{equation}
\Sigma [ \{\boldsymbol{\lambda}(t)\}_{t}]=\tau\int_0^\tau v(t)^2\,\D t
\geq\left(\int_0^\tau v(t)\,\D t\right)^2
=:\ell ^2 [\{ \boldsymbol{\lambda}(t) \}_t]
    \geq L^2(\beta \rightarrow \beta'),
\label{sm:eq:app-length-bound}
\end{equation}
where $\ell[\{ \boldsymbol{\lambda}(t) \}_t]$ is the length of the chosen path, while $L(\beta \rightarrow \beta')$ is the thermodynamic length defined in Eq.~(9) from the main text. The first inequality is saturated at constant thermodynamic speed; the
second by a shortest path. Hence, when a
minimizing path and its constant-speed schedule belong to the set of controllable parameters, the minimum leading-order dissipation, as already defined in Eq.~(3) of the main text, is $\Sigma_{\rm min}(\beta\rightarrow\beta')= L^2(\beta\rightarrow\beta')$.
Otherwise, the distance remains a lower bound. This saturation assumption
is used in the distance-based criteria below. For completeness, we hereby restate the following basic proposition.

\begin{proposition}
    \label{sm:prop:geom_effect}
    The \mname effect occurs if there exist $\beta_2>\beta_1>\beta$ such that
    \begin{align}
        {L}(\beta\rightarrow\beta_1)
        &>{L}(\beta\rightarrow\beta_2).
    \end{align}
\end{proposition}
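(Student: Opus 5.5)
The argument only needs to link the operational quantity $\Sigma_{\rm min}$ of Eq.~\eqref{eq:min-diss} to the geometric quantity $L$ of Eq.~\eqref{eq:thermo-length}. Note that the supplementary restatement asks only for the ``if'' direction, which is the easier half. The plan is to apply the chain of inequalities in Eq.~\eqref{sm:eq:app-length-bound} together with its saturation. Throughout I work under the standing assumption of this section: for each target $\beta'$, a minimizing geodesic, traversed at constant thermodynamic speed, belongs to the admissible control set $\Lambda$. Under this assumption $\Sigma_{\rm min}(\beta\to\beta')=L^2(\beta\to\beta')$ holds for every $\beta'$.

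\emph{Step 1 (lower bound).} For any admissible protocol $\{\boldsymbol\lambda(t)\}_t\in\Lambda$ with the prescribed endpoints, the Cauchy--Schwarz inequality gives $\Sigma\ge\ell^2$. Since $L$ is an infimum over a set of paths that contains every admissible path, we also have $\ell\ge L$. Taking the minimum over $\Lambda$ yields $\Sigma_{\rm min}(\beta\to\beta')\ge L^2(\beta\to\beta')$.

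\emph{Step 2 (saturation).} Let $\boldsymbol\lambda_{\rm opt}$ be a geodesic, reparametrized so that $v(t)=\sqrt{m_{ab}\dot\lambda^a\dot\lambda^b}$ is constant. This reparametrization is possible because $m$ is positive definite, so arclength is strictly increasing along the path, and the resulting schedule can be rescaled to fill the interval $[0,\tau]$. With constant speed, Cauchy--Schwarz becomes an equality, and the geodesic property gives $\ell=L$. Hence $\Sigma[\boldsymbol\lambda_{\rm opt}]=L^2$, and by the achievability assumption this value is attained within $\Lambda$. Combining with Step 1 gives $\Sigma_{\rm min}=L^2$.

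\emph{Step 3 (conclusion).} Both $L\ge0$ and $\Sigma_{\rm min}\ge0$, and the map $x\mapsto x^2$ is strictly increasing on $[0,\infty)$. Therefore $L(\beta\to\beta_1)>L(\beta\to\beta_2)$ holds if and only if $\Sigma_{\rm min}(\beta\to\beta_1)>\Sigma_{\rm min}(\beta\to\beta_2)$. The latter is exactly the defining condition \eqref{eq:neat-definition}. This gives the stated implication, and in fact the equivalence of Proposition~\ref{prop:geom_effect}.

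\emph{Main obstacle.} The delicate step is Step 2, when the infimum in \eqref{eq:thermo-length} is not attained, or when the geodesic lies outside $\Lambda$. Two fallbacks are available.
\begin{itemize}
\item If $L$ is only an infimum, take near-geodesic admissible paths $\ell_k\to L$ at constant speed. This gives $\Sigma_{\rm min}\le\ell_k^2\to L^2$, so with $\Sigma_{\rm min}$ understood as an infimum the equality survives.
\item Without any achievability assumption, only $\Sigma_{\rm min}\ge L^2$ is available. In that case the strict inequality between lengths cannot be transferred to dissipations. This is why I keep the saturation hypothesis explicit and state clearly that the proposition is proved under it.
\end{itemize}
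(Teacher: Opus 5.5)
Your proposal is correct and follows the paper's own route. The paper's proof is the one-line observation that, under the saturation assumption $\Sigma_{\rm min}(\beta\to\beta')=L^2(\beta\to\beta')$, squaring the strict inequality between nonnegative lengths gives the defining NEAT inequality; that assumption is justified just before via Cauchy--Schwarz along an admissible constant-speed geodesic, which your Steps 1--2 re-derive explicitly, and your observation that the converse also holds matches the ``if and only if'' form of Proposition~\ref{prop:geom_effect} in the main text.
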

\begin{proof}
Under the saturation assumption above, squaring the strict inequality
between the nonnegative distances gives
$\Sigma_{\min}(\beta\to\beta_2)<\Sigma_{\min}(\beta\to\beta_1)$,
which is the defining NEAT inequality.
\end{proof}

\section{Minimum system dimension and reduction to incoherent protocols}
In this section, we provide details on the consequences of Proposition 3 in the main text, and in particular on its main corollary:
\begin{corollary}[Minimum system dimension]
    \label{sm:cor:qubitsNoGoapp}
    Classical systems featuring the \mname effect have dimension at least $d\geq 3$ (no \mname
    effect for bits).
    The same holds for quantum systems, under the additional assumption of  detailed-balance
    dynamics and uniqueness of the thermodynamic length geodesic.
\end{corollary}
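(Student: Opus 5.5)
The plan is to reduce both claims to Proposition~\ref{thm:easy-no-go}, which forbids NEAT whenever only $n=1$ independent control is available. The task is then to show that for $d=2$ the relevant control manifold is effectively one-dimensional.

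\emph{Classical case.} For a classical (incoherent) bit, $H_t=\diag(e_1(t),e_2(t))$. Shifting $H_t$ by a multiple of $\eye$ changes neither the Gibbs state nor the metric, since $\J_\omega[\eye]=0$ and hence $m_{ab}$ vanishes along $\eye$. We may therefore set $e_1=0$, leaving the single control $e_2(t)$, so $n\le d-1=1$. The Gibbs curve $\omega_H(\beta')$ is then the whole one-dimensional state manifold, parametrised by $p_1\in(0,1)$. The length from $\beta$ to $\beta'$ is thus $\int\sqrt{m}\,\D\lambda$ along the unique path $H_t=\lambda(t)H$. By Proposition~\ref{thm:easy-no-go} this is monotone in $\beta'$, so no NEAT occurs for bits.

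\emph{Quantum case.} For a qubit, up to the identity, $H_t=\vec\lambda(t)\cdot\vec\sigma$, so $n\le 3$, and we cannot appeal to control counting directly. The strategy is to show that the unique geodesic between commuting endpoints $\omega_H(\beta)$ and $\omega_H(\beta')$ stays in the commuting (diagonal) submanifold. Once this holds, the problem collapses to the classical bit case above. I would work in the eigenbasis of $H$ and consider the reflection $\mathcal{R}:\rho\mapsto\sigma_z\rho\,\sigma_z$, i.e. the unitary phase flip that sends $\sigma_{x,y}\to-\sigma_{x,y}$ and fixes $\sigma_z$. Both endpoints are fixed by $\mathcal{R}$.

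The key step is that $\mathcal{R}$ is an isometry of the metric $m$. The Kubo--Mori map is unitarily covariant, so that part is immediate. For the Lindbladian I would use detailed balance: the Davies-type detailed-balance generator for a qubit Gibbs state commutes with the modular group. It is therefore covariant under rotations about the Hamiltonian axis, and in particular under $\mathcal{R}$ applied around the instantaneous $H_t$. I would check that this covariance is enough to give $m_{ab}(\mathcal{R}\lambda)=\mathcal{R}^*m_{ab}(\lambda)$ on the full control space. Given this, the image of the geodesic under $\mathcal{R}$ is again a minimizing path between the same endpoints, and uniqueness forces it to coincide with the original. The geodesic is then pointwise $\mathcal{R}$-invariant, i.e. diagonal, and so lies on the one-parameter family $H_t=\lambda(t)H$ modulo $\eye$. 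Proposition~\ref{thm:easy-no-go} then gives monotonicity of $L(\beta\to\beta')$, and via Proposition~\ref{prop:geom_effect} the absence of NEAT.

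\emph{Main obstacle.} The delicate step is the isometry property of $\mathcal{R}$ off the diagonal submanifold. There $H_t$ is not diagonal in the fixed basis, so the relevant symmetry of $\lind_t$ is the conjugation by $\sigma_z$ acting on both $H_t$ and the jump operators simultaneously. The detailed-balance assumption must be strong enough that $\lind_t$ depends on $H_t$ covariantly, namely $\lind_{UH U^\dagger}=U\lind_H U^\dagger$ for this $U$. I would try to extract this from the standard qubit detailed-balance structure: jump operators $\sigma_\pm$ in the instantaneous eigenbasis plus dephasing. If that fails in general, a fallback is to state the hypothesis explicitly as covariance, which is what the corollary's ``detailed balance'' assumption is meant to encode. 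A second subtle point is that uniqueness of the geodesic is needed to exclude a pair of symmetric coherent geodesics, which is why it appears as a hypothesis.
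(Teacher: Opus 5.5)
Your classical argument is the paper's: modulo the null identity direction a bit has $n\le d-1=1$ control, and Proposition~\ref{thm:easy-no-go} applies. For qubits you take a different route. The paper argues pointwise. Detailed balance (energy covariance of each $\lind_t$) splits the dissipation rate into nonnegative population and coherence parts, so diagonal drivings are declared local minimizers, and uniqueness promotes them to global ones. You instead seek a reflection isometry $\mathcal R$ that fixes both endpoints, and use uniqueness to force the minimizer into its diagonal fixed-point set.

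The step you flagged is a genuine gap and cannot be closed from detailed balance. Detailed balance constrains each generator separately: $\lind_H$ commutes with the flow generated by $H$ itself, i.e.\ with rotations about the Bloch axis of $H$. At diagonal $H$ this includes $\sigma_z$-conjugation. That is why population and coherence velocities are $m$-orthogonal there, which is the fact the paper uses. Your isometry instead needs $\lind_{\sigma_z H\sigma_z}[\sigma_z X\sigma_z]=\sigma_z\lind_H[X]\sigma_z$ at tilted $H$. This relates generators at different Hamiltonians, and detailed balance says nothing about it.

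Concretely, take Davies generators with fixed coupling operator $S=\sigma_x+\sigma_z$. For $H\propto\cos\theta\,\sigma_z+\sin\theta\,\sigma_x$ the transition element of $S$ in the eigenbasis of $H$ is $\cos\theta-\sin\theta$. So at fixed gap the transition rates scale as $1-\sin2\theta$, which is not even under $\theta\to-\theta$, the action of $\sigma_z$-conjugation. Every generator in this family satisfies detailed balance, yet $\mathcal R$ is not an isometry. Worse, the population cost varies to first order in the tilt while the coherent cost of a small tilt is second order. Hence the diagonal path is not even stationary, and the minimizing geodesic leaves the diagonal. Your fallback is therefore not a reading of ``detailed balance'' but an extra hypothesis: the $U=\sigma_z$ instance of the $H$-covariance condition~\eqref{sm:eq:18app}, which the paper introduces separately.

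With that hypothesis added, your argument is complete and clean. The $\mathcal R$-image of the unique constant-speed minimizer is a minimizer with the same endpoints, hence equal to it, hence diagonal, and Proposition~\ref{thm:easy-no-go} finishes. Detailed balance then plays no further role, and uniqueness is used transparently as uniqueness of the minimizer, rather than through a local-to-global step. The paper's route stays formally closer to the stated hypotheses and gives the reduction to incoherent protocols in any $d$. But its passage from the pointwise splitting to ``the optimal driving is diagonal'' tacitly needs the population block of the metric to be insensitive to the eigenbasis, which is again what~\eqref{sm:eq:18app} supplies. So both routes need covariance input; yours needs only its discrete $\mathbb{Z}_2$ part.
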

\begin{proof}
    For classical systems this is a direct consequence of the fact that physical states correspond to normalized probabilities, and as such belong to a $(d-1)$-dimensional manifold. Then, $n\leq d-1$ and the claim follows from Proposition 3 directly. In the quantum case, the manifold of states is $d^2-1$ dimensional. However, for dynamics satisfying detailed balance, Eq.~\eqref{sm:eq:app-length-bound} is locally minimized by purely diagonal drivings (see~\cite{abiuso2020geometric}). Moreover, assuming uniqueness of the geodesics, local minimizers are also global. Then, restricting oneself to optimal drivings reduces the problem to its classical equivalent. For a qubit, this reduces the number of degrees of freedom from $3$ to $1$.
\end{proof}

There are a couple of remarks to be made regarding Cor.~\ref{sm:cor:qubitsNoGoapp}. First, for simplicity, we preferred to restrict to the detailed-balance case in the main text, but we can actually relax this requirement to that of energy covariance of the generator, that is:
\begin{align}
    \lind_s(e^{-i H_s t} Xe^{i H_s t}) =e^{-i H_s t} \lind_s(X)e^{i H_s t}.
\end{align}
This condition implies that the population dynamics completely decouples from the coherence dynamics, which is sufficient for the local minimizer to be diagonal. Indeed, for $\dot H_t = \dot H_t^{\rm diag}+\dot H_t^{\rm coh}$, the dissipation in Eq.~\eqref{sm:eq:app-work-forms} can be rewritten as~\cite{abiuso2020geometric}:
\begin{align}
    W_{\rm diss}
&=-\beta\int_0^\tau\!\D t\,\Tr\!\left[\dot H_t\lind_t^+
[\J_{\omega_t}[\dot H_t]]\right] = -\beta\int_0^\tau\!\D t\,\left(\Tr\!\left[\dot H_t^{\rm diag}\lind_t^+
[\J_{\omega_t}[\dot H_t^{\rm diag}]]\right]+\Tr\!\left[\dot H_t^{\rm coh}\lind_t^+
[\J_{\omega_t}[\dot H_t^{\rm coh}]]\right]\right)
\end{align}
up to corrections of order $O(\tau^{-2})$. Since both traces give a positive contribution, setting one of them to zero locally lowers the value of the dissipated work. This is possible when the endpoints of the trajectory commute, which is exactly the case we are studying.

Secondly, it is worthwhile to provide a sufficient condition for the uniqueness of the thermodynamic length geodesic. This is given by $H$-covariance of the driving dynamics~\cite{cavina2018variational}, that is the requirement that the rates of the Lindbladians are independent of the basis in which the thermalization is taking place. In formulae, suppose that $\lind_s$ thermalizes to $\omega_s$ and $\widetilde{\lind}_s$ thermalizes to $\tilde{\omega}_s= U_s \omega_s U_s^\dagger$. Then, $H$-covariance means that the two Lindbladians are connected by the relation:
\begin{align}
     \lind_s(X) = U_s^\dagger\widetilde{\lind}_s(U_s X U_s^\dagger)U_s.\label{sm:eq:18app}
\end{align}
This additional restriction implies that creating coherences is always detrimental for trajectories with commuting endpoints, since the rates are independent of the basis. Without this additional requirement, one could design a protocol that dissipates extremely fast in one specific basis, and extremely slowly in all other orthogonal directions. Then, if the endpoints were diagonal in one of the slow bases, rotating into the fast basis, changing the spectrum there, and finally rotating back could end up being advantageous. Then, Eq.~\eqref{sm:eq:18app} rules out exactly this kind of protocol.

\section{Geometric conditions for NEAT}
\label{sm:app:geometric-conditions}
In this section, we provide details on the differential-geometric analysis characterizing NEAT and the derivation of Proposition 4 in the main text.

\begin{lem}[Scalar condition for \mname]
\label{sm:lem:neat-scalar}
Suppose $L(\beta\rightarrow\beta')$ is continuous on a cooling interval and differentiable in its interior. There exist $\beta<\beta_1<\beta_2$ in this interval such that $L(\beta\rightarrow\beta_2) <{L}(\beta\rightarrow\beta_1)$ if and only if \begin{equation}
    \left.\frac{\partial}{\partial\beta'}
    L(\beta\rightarrow\beta')\right|_{\beta'=\beta_*}<0
\end{equation}
at some intermediate target $\beta_*$.
\end{lem}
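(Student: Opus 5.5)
We prove the two directions separately. Throughout, write $g(\beta')\coloneqq L(\beta\rightarrow\beta')$ and let $I=[\beta,\beta_{\max}]$ be the cooling interval; by hypothesis $g$ is continuous on $I$ and differentiable on its interior. The argument uses only elementary real analysis on $g$, namely the mean value theorem together with the definition of the derivative. No geometric input is needed at this stage; the geometric content enters only afterwards, when Prop.~\ref{prop:neat-geometric} identifies $\partial_{\beta'}g$ with the pairing in Eq.~\eqref{eq:neat-endpoint-pairing} via the first-variation formula.

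\emph{($\Rightarrow$)} Suppose there exist $\beta<\beta_1<\beta_2$ in $I$ with $g(\beta_2)<g(\beta_1)$. We apply the mean value theorem to $g$ on $[\beta_1,\beta_2]$. This is legitimate because $g$ is continuous on that closed subinterval and differentiable on $(\beta_1,\beta_2)$, which lies in the interior of $I$. It yields a point $\beta_*\in(\beta_1,\beta_2)$ with
\begin{equation}
g'(\beta_*)=\frac{g(\beta_2)-g(\beta_1)}{\beta_2-\beta_1}<0 .
\end{equation}
Since $\beta_*$ is strictly between $\beta$ and the endpoint of $I$, it is an intermediate target, as the statement requires.

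\emph{($\Leftarrow$)} Suppose $g'(\beta_*)<0$ at some interior point $\beta_*>\beta$. By the definition of the derivative, there is $\delta>0$ such that $g(\beta_*+h)<g(\beta_*)$ for all $0<h<\delta$, with $\beta_*+\delta$ still inside $I$. We then choose $\beta_1=\beta_*$ and $\beta_2=\beta_*+h$ for any such $h$. This gives $\beta<\beta_1<\beta_2$ in $I$ and $g(\beta_2)<g(\beta_1)$, which is the required inversion.

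The only subtle point is bookkeeping: in both directions the chosen points must lie strictly inside the interval and strictly above $\beta$. This is guaranteed because $\beta_*$ is interior and $\delta$ can be shrunk as needed. One might worry that differentiability of $L$ in $\beta'$ could fail, for instance at cut points where minimizing geodesics become non-unique. That issue is excluded here by hypothesis and is handled separately in Prop.~\ref{prop:neat-geometric} through the uniqueness assumption.
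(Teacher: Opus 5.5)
Your proof is correct and follows the paper's argument: the mean value theorem on $[\beta_1,\beta_2]$ for the forward direction, and the definition of the derivative with $\beta_1=\beta_*$, $\beta_2=\beta_*+h$ for the converse. Your extra checks that the chosen points stay strictly inside the interval and above $\beta$ simply spell out details the paper leaves implicit.
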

\begin{proof}
For a strict decrease between $\beta_1$ and $\beta_2$, the mean value theorem gives an interior $\beta_*$ whose derivative equals the negative secant slope. Conversely, a negative derivative at $\beta_*$ implies $ L(\beta\to\beta_*+h)< L(\beta\to\beta_*)$ for sufficiently small $h>0$. Taking $\beta_1=\beta_*$ and
$\beta_2=\beta_*+h$ proves the converse.
\end{proof}

\textbf{Proposition 4} (Geometric condition for NEAT)
\label{sm:prop:neat-geometric}
\textit{Assume that the minimizing geodesic $\varsigma$ from the initial state to every intermediate target is unique. Then \mname occurs if and only if, at some intermediate target $\beta_*$, the scalar product $(\cdot,\cdot)_m$ in metric $m$ between the tangent $\mathbf{v}_\varsigma(\beta_*)$ of the geodesic ending at $\omega(\beta_*)$ and the tangent $\mathbf{v}_\omega(\beta_*)$ of the Gibbs curve is negative, i.e.
\begin{equation}
    \exists\ \beta_*\ \text{ s.t. }\
    \big( \mathbf{v}_\varsigma(\beta_*),\mathbf{v}_\omega(\beta_*) \big)_{m} < 0.
    \label{sm:eq:neat-endpoint-pairing}
\end{equation}
}

\begin{proof}
The main property used here is the smoothness of the geodesic distance. Given a starting point, its \textit{cut locus} is the set of endpoints at which the minimizing geodesic is not uniquely defined, or the endpoint is conjugate to the starting point. Away from the cut locus (and the initial point itself) the minimizing geodesic is unique and the geodesic distance is smooth.

Fix $\beta$. For each target $\beta'$ away from the cut locus, let $\varsigma(\beta')$ be the minimizing geodesic from $\omega(\beta)$ to $\omega(\beta')$, parametrized proportionally to arc length with $s\in[0,1]$. Let $L(\beta\to\beta')=d_m\big(\omega(\beta),\omega(\beta')\big)$ be the geodesic distance between $\omega(\beta)$ and $\omega(\beta')$. By the property above, $\nabla\, d_m(\omega(\beta),\cdot)\big|_{\omega(\beta')} = \mathbf{v}_\varsigma(\beta')$, where $\mathbf{v}_\varsigma(\beta')$ is the tangent at $\omega(\beta')$ of the minimizing geodesic arriving there. Since the Gibbs curve is $C^1$, using the chain rule,
\begin{equation}
\frac{\partial L(\beta\to\beta')}{\partial\beta'} = \big(\mathbf{v}_\varsigma(\beta'),\,\mathbf{v}_\omega(\beta')\big)_{m}\Big|_{\omega(\beta')},
\label{sm:eq:app-first-variation}
\end{equation}
with $\mathbf{v}_\omega(\beta')$ the tangent at $\omega(\beta')$ of the Gibbs curve. Thus, to first order the length changes only through the component of the endpoint velocity along the geodesic.
By \eqref{sm:eq:app-first-variation}, $\partial_{\beta'}L(\beta\to\beta')<0$ at $\beta'_{\phantom{*}}=\beta_*^{\phantom{'}}$ if and only if $\big(\mathbf{v}_\varsigma(\beta_*),\mathbf{v}_\omega(\beta_*)\big)_m<0$. Hence the distance from the initial state decreases at some intermediate
target exactly when \eqref{sm:eq:neat-endpoint-pairing} holds, which is the claim.
\end{proof}

\section{NEAT effect in incoherent processes}
\label{sm:app:DB}

\subsection{Dynamics of incoherent processes}
\label{sm:app:classical_DB}
For $H_t=\sum_{a=1}^{d}e_a(t)\ketbra{a}$, assume that the generator preserves the diagonal subspace at all times. An initially diagonal state remains diagonal, and its population vector $\boldsymbol\rho$ obeys
\begin{equation}
\dot{\boldsymbol\rho}_t=M_t\boldsymbol\rho_t,\qquad
\boldsymbol1^{\mathsf T}M_t=0,\qquad
M_t\boldsymbol\omega_t=0,\qquad (M_t)_{ab}\geq0\quad(a\ne b).
\end{equation}
Here $\boldsymbol\omega_t$ is a vector with components $\omega_a(t)=e^{-\beta e_a(t)}/Z_t$, and $Z_t=\sum_{a=1}^d e^{-\beta e_a(t)}$. We also introduce the column vector $\boldsymbol1=(1,\ldots,1)^{\mathsf T}$, where $\mathsf T$ denotes the transpose operation. With the current convention, $(M_t)_{ab}$ can be interpreted as the rate of the transition from level $b$ to level $a$. Moreover, it should be noticed that the Drazin inverse of the Lindblad operator reduces to the Drazin inverse of $M_t$ if we restrict to diagonal states.

On diagonal observables, the Kubo--Mori map reduces to the covariance matrix, so that on any traceless vector $\boldsymbol{v}$ (i.e., such that $\boldsymbol1^{\mathsf T}\boldsymbol v=0$) it acts as
\begin{equation}
(J_{\boldsymbol\omega}\boldsymbol{v})_a = \omega_a v_a -\boldsymbol\omega^{\mathsf T}\boldsymbol{v},
\qquad
(J_{\boldsymbol\omega}^{-1}\boldsymbol v)_a = \omega^{-1}_a v_a+c,
\end{equation}
where $c$ is an arbitrary constant that drops out on traceless vectors. Equation~\eqref{sm:eq:app-work-forms} therefore becomes
\begin{equation}
W_{\rm diss}=-\frac1\beta\int_0^\tau\!\D t\,\sum_{ab}\,(\dot{\omega}_t)_a\,(\omega_t^{-1})_a\,(M_t^+)_{ab}\,(\dot{\omega}_t)_b + O(\tau^{-2}).
\label{sm:eq:app-population-work}
\end{equation}
We assume the population dynamics $M$ to have a unique fixed point, and to satisfy detailed balance. The latter means that its rates satisfy the symmetry condition
$M_{ab}\omega_b=M_{ba}\omega_a$. Then, let us introduce the matrix
\begin{equation}
R_{ab}=-\omega_a^{-1/2}M_{ab}^{\phantom.}\,\omega_b^{1/2}.
\end{equation}
Thanks to the symmetry in the rates, it holds that $ R_{ba} = R_{ab}$. Moreover, the vector $\boldsymbol q=\sqrt{\boldsymbol\omega}$ (where the square root is component-wise) satisfies $R\boldsymbol q=0$. It should also be noticed that the Drazin inverse $R^+$ is simply given by $R^+_{ab}=-\omega^{-1/2}_aM^+_{ab}\,\omega_b^{1/2}$ (since the transformation $M\rightarrow R$ is invertible, and compatible with matrix multiplication). Then, assuming a driving of the form $H_t=\sum_{a}e_a(t)\ketbra{a}{a}$, passing from $\omega_a(t)$ to $q_a(t) := \sqrt{\omega_a(t)}$, we can rewrite the dissipated work as
\begin{equation}
W_{\rm diss}=\frac4\beta\int_0^\tau\!\D t\,\sum_{ab}
\,(\dot{q}_t)_a \,(R_t^+)_{ab}
\,(\dot{q}_t)_b+O(\tau^{-2}).
\label{sm:eq:app-diagonal-metric}
\end{equation}
From this equation we can deduce the expression for the metric in the space of square-root populations.

\subsection{The qutrit example}
\label{sm:app:trit_mpemba}
For the equilibrium populations of the qutrit, we use the spherical coordinates
\begin{equation}
\boldsymbol \omega=
\begin{pmatrix}\sin^2\theta\cos^2\varphi\\\cos^2\theta\\\sin^2\theta\sin^2\varphi\end{pmatrix},
\qquad
\boldsymbol q=
\begin{pmatrix}\sin\theta\cos\varphi\\\cos\theta\\\sin\theta\sin\varphi\end{pmatrix},
\qquad 0<\theta< \pi/2,\qquad 0<\varphi<\pi/2.
\end{equation}
Define the orthonormal tangent vectors
$\boldsymbol e_\theta=\partial_\theta\boldsymbol q$ and
$\boldsymbol e_\varphi=(\sin\theta)^{-1}\partial_\varphi\boldsymbol q$, and choose
\begin{equation}
R=r_\theta\boldsymbol e_\theta\boldsymbol e_\theta^{\mathsf T}
+r_\varphi\boldsymbol e_\varphi\boldsymbol e_\varphi^{\mathsf T}.
\label{sm:eq:app-qutrit-generator}
\end{equation}
The positive quantities $r_\theta,r_\varphi$ are the relaxation rates of the two tangent modes. It can be directly verified that $R\boldsymbol{q} = 0$ (since $\boldsymbol{q}$ is orthogonal both to $\boldsymbol e_\theta$ and $\boldsymbol e_\varphi$), so that the corresponding $M$ fixes $\boldsymbol\omega$.
We can write the metric in these coordinates starting from Eq.~\eqref{sm:eq:app-diagonal-metric}. Indeed, we have that $ \dot {\boldsymbol{q}} = \dot\theta \boldsymbol{e}_\theta + \dot\varphi\sin\theta\boldsymbol{e}_\varphi$, so Eq.~\eqref{sm:eq:app-diagonal-metric} becomes:
\begin{align}
    W_{\rm diss}=\int_0^\tau\!\D t\,(\dot\theta^2 \,m_{\theta\theta} +  \dot\varphi^2 \,m_{\varphi\varphi}  + 2 \dot\theta\dot\varphi \,m_{\theta\varphi}),
\end{align}
where $m_{\theta\theta} := {4}{\beta}^{-1}\sum_{ab} (\partial_\theta\boldsymbol{q})_a R^+_{ab} (\partial_\theta\boldsymbol{q})_b$ (and similarly for the other components).
Then, we can explicitly compute each of the components to obtain:
\begin{equation}
m_{\theta\theta}=\frac4{\beta r_\theta},\qquad
m_{\varphi\varphi}=\frac{4\sin^2\theta}{\beta r_\varphi},\qquad
m_{\theta\varphi}=0.
\label{sm:eq:app-qutrit-metric}
\end{equation}
For constant rates, the geodesic equations are
\begin{equation}
\ddot\theta-\frac{r_\theta}{r_\varphi}\sin\theta\cos\theta\,\dot\varphi^2=0,
\qquad \ddot\varphi+2\cot\theta\,\dot\theta\dot\varphi=0.
\label{sm:eq:app-qutrit-geodesics}
\end{equation}
They conserve $\sin^2\theta\,\dot\varphi$ and the metric speed~\cite{carroll2004spacetime}. The two rates are generally different. To simplify the form of the metric, we introduce, without loss of generality, the rescaled variable $\psi=\sqrt{r_\theta/r_\varphi}\,\varphi$. Transforming $m_{\varphi\varphi}$ of Eq.~\eqref{sm:eq:app-qutrit-metric} gives $m_{\psi\psi}={(4\sin^2\theta)}/{\beta r_\theta}$; together with $m_{\theta\theta}$, this is the round-sphere metric of radius $2/\sqrt{\beta r_\theta}$. This means that in these coordinates, the geodesics are given by great circles, so that one can easily compute the length of optimal paths. Indeed, consider two vectors $\widetilde{\boldsymbol q}_A$ and $\widetilde{\boldsymbol q}_B$, corresponding to expressing two states $\boldsymbol{\omega}_A$ and $\boldsymbol{\omega}_B$ in the $(\theta, \psi)$ coordinates. Since they are unit vectors, their scalar product corresponds to the cosine of the angle $\kappa$ between them, i.e.:
\begin{equation}
\widetilde{\boldsymbol q}_A^{\mathsf T}\widetilde{\boldsymbol q}_B=\cos\kappa
=\cos\theta_A\cos\theta_B+
\sin\theta_A\sin\theta_B\cos(\psi_B-\psi_A).
\end{equation}
Then, the shorter great-circle arc has length $2\kappa/\sqrt{\beta r_\theta}$~\cite{lee2018riemannian}.
Substituting back $\varphi$ yields
\begin{equation}
 L(\boldsymbol\omega_A,\boldsymbol\omega_B)
=\frac2{\sqrt{\beta r_\theta}}\arccos\!\left[
\cos\theta_A\cos\theta_B+\sin\theta_A\sin\theta_B
\cos\!\left(\sqrt{\frac{r_\theta}{r_\varphi}}(\varphi_B-\varphi_A)\right)\right],
\label{sm:eq:app-qutrit-distance}
\end{equation}
corresponding to the geodesics~\cite{lee2018riemannian}
\begin{equation}
\widetilde{\boldsymbol q}(t)=
\frac{\sin[(1-t/\tau)\kappa]}{\sin\kappa}\widetilde{\boldsymbol q}_A
+\frac{\sin[(t/\tau)\kappa]}{\sin\kappa}\widetilde{\boldsymbol q}_B.
\end{equation}
The corresponding energy driving can be recovered as follows. First, notice that $\theta(t)=\arccos\widetilde q_2(t)$ and
$\varphi(t)=\sqrt{r_\varphi/r_\theta}\,
\operatorname{atan}(\widetilde q_3(t) / \widetilde q_1(t))$;
then, we can invert the relation between the coordinates of thermal states and their Hamiltonian, to obtain that the energy at each time should satisfy:
\begin{equation}
e_1(t)=0,\qquad
e_2(t)=-\beta^{-1}\log\frac{\cot^2\theta(t)}{\cos^2\varphi(t)},\qquad
e_3(t)=-\beta^{-1}\log\tan^2\varphi(t).
\end{equation}
This gives the explicit driving needed to minimize the dissipation.

\subsection{Birth-death master equations}
We use levels $a=1,\ldots,d$ with $e_1=0<e_2<\cdots<e_d$.
Bond $a=1,\ldots,d-1$ connects levels $a,a+1$ and has gap
$\Delta_a=e_{a+1}-e_a$. Thus $A_a=\ket a\bra{a+1}$ and
\begin{equation}
\lind_t^{\rm diss}[\rho]=\sum_{a=1}^{d-1}
\left(\gamma_a^\downarrow\mathcal D_{A_a}[\rho]
+\gamma_a^\uparrow\mathcal D_{A_a^\dagger}[\rho]\right),\qquad
\mathcal D_A[\rho]=A\rho A^\dagger-\tfrac12\{A^\dagger A,\rho\}.
\end{equation}
Here $\boldsymbol\rho=(p_1,\ldots,p_d)^{\mathsf T}$ is a population vector, and $\gamma_a^\uparrow,\gamma_a^\downarrow$ are the rates for $a\to a+1$ and $a+1\to a$, respectively.
The population equation and detailed balance are
\begin{align}
(M\boldsymbol\rho)_a&=\gamma_{a-1}^\uparrow p_{a-1}
+\gamma_a^\downarrow p_{a+1}
-(\gamma_{a-1}^\downarrow+\gamma_a^\uparrow)p_a,\\
\gamma_a^\uparrow\omega_a&=\gamma_a^\downarrow\omega_{a+1},
\qquad \gamma_a^\uparrow/\gamma_a^\downarrow=e^{-\beta\Delta_a}.
\end{align}
Boundary terms are omitted, with $\gamma_0^{\uparrow,\downarrow}=\gamma_d^{\uparrow,\downarrow}=0$. Following the main-text cumulative definition,
\begin{equation}
\Omega_a=\sum_{b=1}^a\omega_b,\quad a=1,\ldots,d-1,\qquad
w_a=\gamma_a^\uparrow\omega_a
=\gamma_a^\downarrow\omega_{a+1}>0.
\label{sm:eq:app-bd-conductance}
\end{equation}
Thus $w_a$ is the conductance across the boundary of $\Omega_a$;
$\Omega_d=1$ is not an independent coordinate.
For the bosonic bath, with $\hbar=1$,
\begin{equation}
\gamma_a^\downarrow=|c_a|^2\mathcal J(\Delta_a)[n_\beta(\Delta_a)+1],\qquad
\gamma_a^\uparrow=|c_a|^2\mathcal J(\Delta_a)n_\beta(\Delta_a),\qquad
n_\beta(\nu)=(e^{\beta\nu}-1)^{-1},
\end{equation}
with $c_a$ being the system coupling matrix element on bond $a$, $\mathcal J(\nu)$ the bath spectral density, and $n_\beta(\nu)$ the Bose occupation at frequency $\nu$.
A common small coupling prefactor can be restored to all rates to ensure
weak coupling. It rescales every distance by the same inverse square root
and leaves the cooling ordering unchanged.

\subsubsection{Drazin inverse of the rate matrix}
Let $\boldsymbol v$ have zero sum and solve $M\boldsymbol h=\boldsymbol v$
with $\sum_a h_a=0$. Define $V_a=\sum_{b=1}^a v_b$ and
$j_a=\gamma_a^\uparrow h_a-\gamma_a^\downarrow h_{a+1}$,
with $j_0=j_d=0$. The continuity equation gives
$v_a=j_{a-1}-j_a$, hence $j_a=-V_a$. With $f_a=h_a/\omega_a$,
detailed balance yields $f_{a+1}-f_a=V_a/w_a$. Therefore
\begin{equation}
(M^+\boldsymbol v)_a=\omega_a\left[\Phi_a-
\sum_{b=1}^{d}\omega_b\Phi_b\right],\qquad
\Phi_1=0,\qquad \Phi_a=\sum_{b=1}^{a-1}\frac{V_b}{w_b}.
\label{sm:eq:app-bd-drazin}
\end{equation}
Together with $M^+\boldsymbol\omega=0$, this specifies the inverse on the
full population space~\cite{meyer1975groupinverse}.

\subsubsection{Thermodynamic metric}
Substitution into Eq.~\eqref{sm:eq:app-population-work} gives the quadratic form
\begin{equation}
-\frac1\beta\sum_{a=1}^{d}\frac{v_a(M^+\boldsymbol v)_a}{\omega_a}
=-\frac1\beta\sum_{a=1}^{d}v_a\Phi_a
=\frac1\beta\sum_{a=1}^{d-1}\frac{V_a^2}{w_a}.
\end{equation}
Taking $\boldsymbol v=\dot{\boldsymbol\omega}$ gives $V_a=\dot\Omega_a$. Thus the metric in cumulative coordinates is
\begin{equation}
m_{ab}(\boldsymbol\Omega)=\frac{\delta_{ab}}{\beta w_a(\boldsymbol\Omega)},
\qquad a,b=1,\ldots,d-1,
\label{sm:eq:app-bd-cdf-metric}
\end{equation}
Here $\delta_{ab}$ is the Kronecker delta, equal to one for $a=b$ and zero otherwise. The populations satisfy
\begin{equation}
\omega_a=\Omega_a-\Omega_{a-1},\quad
\Omega_0=0,\quad\Omega_d=1,\qquad
w_a=|c_a|^2\mathcal J(\Delta_a)
\frac{\omega_a\omega_{a+1}}{\omega_a-\omega_{a+1}}.
\label{sm:eq:app-bd-weight-explicit}
\end{equation}
The domain satisfies $\omega_1>\cdots>\omega_d>0$ and normalization.
Let us denote by $\mathds1_{a\leq c}$ the indicator, equal to one when $a\leq c$ and zero otherwise.
For completeness, $\partial\Omega_a/\partial e_b
=-\beta\omega_b(\mathds1_{b\leq a}-\Omega_a)$ gives the metric
in the energy eigenbasis,
\begin{equation}
m_{e_a e_b}=\beta\omega_a\omega_b\sum_{c=1}^{d-1}
\frac{(\mathds1_{a\leq c}-\Omega_c)(\mathds1_{b\leq c}-\Omega_c)}{w_c}.
\label{sm:eq:app-bd-energy-metric}
\end{equation}
Before fixing $e_1$, its only null direction is a uniform energy shift.

\subsubsection{Static bond conductances}

We now show that within this model, even with some level of anisotropy, if such an anisotropy is constant and cannot be modulated with the energy levels (or can but isn't) no NEAT effect can occur.

\begin{lem}[No same-branch NEAT for static bond conductances]
\label{sm:lem:static-conductances}
For a birth--death process with constant positive bond conductances
$w_a=\widehat w_a$, suppose the full ordered-population domain and
constant-speed schedules are accessible. Along the positive-temperature
Gibbs family of a fixed ordered Hamiltonian, the thermodynamic distance
from the initial state is nondecreasing as the target moves away from it
on either thermal branch. Hence no same-branch NEAT effect occurs.
\end{lem}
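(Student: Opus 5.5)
With constant conductances $w_a=\widehat w_a$, the metric \eqref{sm:eq:app-bd-cdf-metric} in cumulative coordinates, $m_{ab}=\delta_{ab}/(\beta\widehat w_a)$, is constant. The rescaling $y_a=\Omega_a/\sqrt{\beta\widehat w_a}$ therefore makes the geometry flat Euclidean. The accessible domain is the set of strictly ordered populations $\omega_1>\cdots>\omega_d>0$. In $\Omega$ coordinates this reads $\Omega_a-\Omega_{a-1}>\Omega_{a+1}-\Omega_a$, together with $\Omega_0=0$ and $\Omega_d=1$. These are linear inequalities, so the domain is an (open) convex polytope, and it remains convex after the linear rescaling to $y$.

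In a convex Euclidean domain the straight segment between any two points is admissible. Since it is the unique shortest curve in $\mathbb{R}^{d-1}$, it is also the unique minimizing geodesic inside the domain. Because constant-speed schedules are assumed accessible, the saturation argument of Sec.~\ref{sm:app:slow-driving} applies and gives
\begin{equation}
L(\beta\to\beta')^2=\sum_{a=1}^{d-1}\frac{\bigl(\Omega_a(\beta')-\Omega_a(\beta)\bigr)^2}{\beta\widehat w_a}.
\end{equation}

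The claim then reduces to monotonicity of each cumulative population along the Gibbs family. For the fixed ordered Hamiltonian, with $e_1<\cdots<e_d$ (degeneracies handled by continuity), write $\omega_b(\lambda)\propto e^{-\lambda\beta e_b}$. I will show that $\Omega_a(\lambda)$ is nondecreasing in $\lambda$ for every $a$. Differentiating,
\begin{equation}
\partial_\lambda\Omega_a=-\beta\sum_{b\le a}\omega_b\,(e_b-\langle e\rangle_\lambda)=\beta\,\Omega_a\bigl(\langle e\rangle_\lambda-\langle e\rangle_{\lambda,\le a}\bigr),
\end{equation}
where $\langle e\rangle_{\lambda,\le a}$ is the mean energy conditioned on the lowest $a$ levels. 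This is consistent with $\partial\Omega_a/\partial e_b$ given above. The conditional mean over the lowest $a$ levels is at most the unconditional mean, so $\partial_\lambda\Omega_a\ge0$. Equivalently, the ratios $\omega_b/\omega_c=e^{-\lambda\beta(e_b-e_c)}$ with $b<c$ increase with $\lambda$, so the Gibbs family is stochastically ordered.

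Consequently, on the cooling branch $\lambda>1$, each difference $\Omega_a(\lambda)-\Omega_a(1)$ is nonnegative and nondecreasing in $\lambda$. Each term in the sum for $L^2$ is then nondecreasing, and so is $L$. On the heating branch $\lambda<1$, the differences are nonpositive and become more negative as $\lambda$ decreases, so their squares again grow as the target moves away from the initial state. By Prop.~\ref{sm:prop:geom_effect} and the definition of NEAT, no same-branch NEAT occurs.

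The main obstacle is the geometric step: proving that the ordered-population domain is convex in $\Omega$ coordinates. This is what guarantees that the straight line stays inside the accessible set, so that $L$ equals the Euclidean distance rather than a longer constrained distance. If the order constraints cause trouble near the boundary, I will instead note that the Gibbs family is itself a curve in the closure of the domain, and that the chord between two Gibbs points has strictly ordered populations, because each of the conditions $\omega_a>\omega_{a+1}$ is linear and holds at both endpoints.
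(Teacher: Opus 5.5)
Your proposal is correct and follows the paper's proof essentially step by step. You rescale the constant-conductance metric to a Euclidean one and use that the straight chord between Gibbs endpoints is admissible; here you make explicit the convexity of the ordered domain in $\Omega$ coordinates, which the paper takes for granted. This gives $L^2=\beta^{-1}\sum_a[\Omega_a(\beta')-\Omega_a(\beta)]^2/\widehat w_a$, and you conclude from the monotonicity of each $\Omega_a$ along the Gibbs family. Your derivative $\beta\,\Omega_a(\langle e\rangle-\langle e\rangle_{\le a})$ coincides with the paper's $\Omega_a(1-\Omega_a)(\langle e\rangle_{b>a}-\langle e\rangle_{b\leq a})$ after the chain rule $\beta'=\lambda\beta$ and the decomposition $\langle e\rangle=\Omega_a\langle e\rangle_{\le a}+(1-\Omega_a)\langle e\rangle_{>a}$.
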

\begin{proof}
The two thermal branches are cooling ($\beta'>\beta$) and heating ($0<\beta'<\beta$), with the initial $\beta$ fixed.
If $w_a=\widehat w_a>0$ is constant, the cumulative metric is Euclidean
after a diagonal rescaling. With access to the full ordered-population
domain, the straight segment between Gibbs endpoints is admissible, so
\begin{equation}
L^2(\beta\to\beta')=
\frac1\beta\sum_{a=1}^{d-1}
\frac{[\Omega_a(\beta')-\Omega_a(\beta)]^2}{\widehat w_a}.
\end{equation}
Along the Gibbs family of an ordered, nonconstant $H_0$, with $\beta'>0$,
\begin{equation}
\frac{\D\Omega_a}{\D\beta'}=
\Omega_a(1-\Omega_a)
\bigl(\langle e\rangle_{b>a}-\langle e\rangle_{b\leq a}\bigr)>0.
\end{equation}
The conditional averages are $\langle e\rangle_{b\leq a}=\sum_{b=1}^a e_b\omega_b/\Omega_a$ and $\langle e\rangle_{b>a}=\sum_{b=a+1}^d e_b\omega_b/(1-\Omega_a)$, with all populations evaluated at $\beta'$. Every squared
difference increases away from $\beta$ on either thermal branch, excluding
same-branch NEAT. Constant conductances do not mean constant upward and
downward rates: their ratio still obeys detailed balance.
\end{proof}

\section{Consequences for thermal machines}
\label{sm:app:engines}
In this section, we provide a discussion on the performance improvements that anomalous effects yield on thermal machines. The starting observation is simple: a basic Otto cycle alternates 2 adiabatic strokes with 2 isochoric strokes, whereas Carnot cycles alternate 2 adiabatic strokes with 2 isothermal strokes~\cite{quan2007quantum,seifert2012stochastic}. In both cases, the adiabatic strokes are typically taken to be instantaneous and dissipationless.
It then follows that the finite-time performance of these thermal machines highly depends on the non-adiabatic strokes.

In particular, the Mpemba effect can be seen as the speedup of certain fixed-Hamiltonian relaxations --isochoric strokes, in the language of thermal engines-- while their amplitude increases. As such, the Mpemba effect naturally boosts the performance of Otto engines, as shown, e.g., in Refs.~\cite{lin2022power,liu2024speeding}.

The NEAT effect can be seen as being the analogous feature of close-to-equilibrium isothermal strokes.  NEAT therefore can be used to improve the performance of \emph{finite-time Carnot engines}, as we analyze in the following.

\subsection{Carnot-cycle geometry}

For the reference Carnot cycle in Fig.~\ref{sm:fig:carnot-cycle}, use natural units with $k_{\rm B}=1$ and take
$T_h>T_c$, $\beta_{h,c}=T_{h,c}^{-1}$ and $0<\lambda<1$.
The equilibrium corners are defined as the state-Hamiltonian pairs
\begin{align}
A&=(\omega_{\rm lo},H),&
B&=(\omega_{\rm hi},\lambda H),&
C&=(\omega_{\rm hi},\lambda(T_c/T_h)H),&
D&=(\omega_{\rm lo},(T_c/T_h)H),
\end{align}
where $\omega_{\rm lo}=\omega_H(\beta_h)$ and $\omega_{\rm hi}=\omega_{\lambda H}(\beta_h)$.
We define $\Delta S=S(\omega_{\rm hi})-S(\omega_{\rm lo})>0$, with $S(\rho)=-\Tr\rho\ln\rho$ the (von Neumann) entropy of the system.

\paragraph{Heat balances and the meaning of dissipated work.}
The work performed on the system and the heat absorbed by the system take the usual form
\begin{align}
    W = \int_0^\tau \D t\, \tr [\rho_t \dot{H}_t]; \quad Q=\int_0^\tau  \D t\, \tr [\dot{\rho}_t {H}_t].
\end{align}
The cycle is oriented in order to let the heat $Q_h$ be positive when absorbed from the hot bath (cf. Fig.~\ref{sm:fig:carnot-cycle}).
\begin{figure}
    \centering
    \includegraphics[width=.55\textwidth]{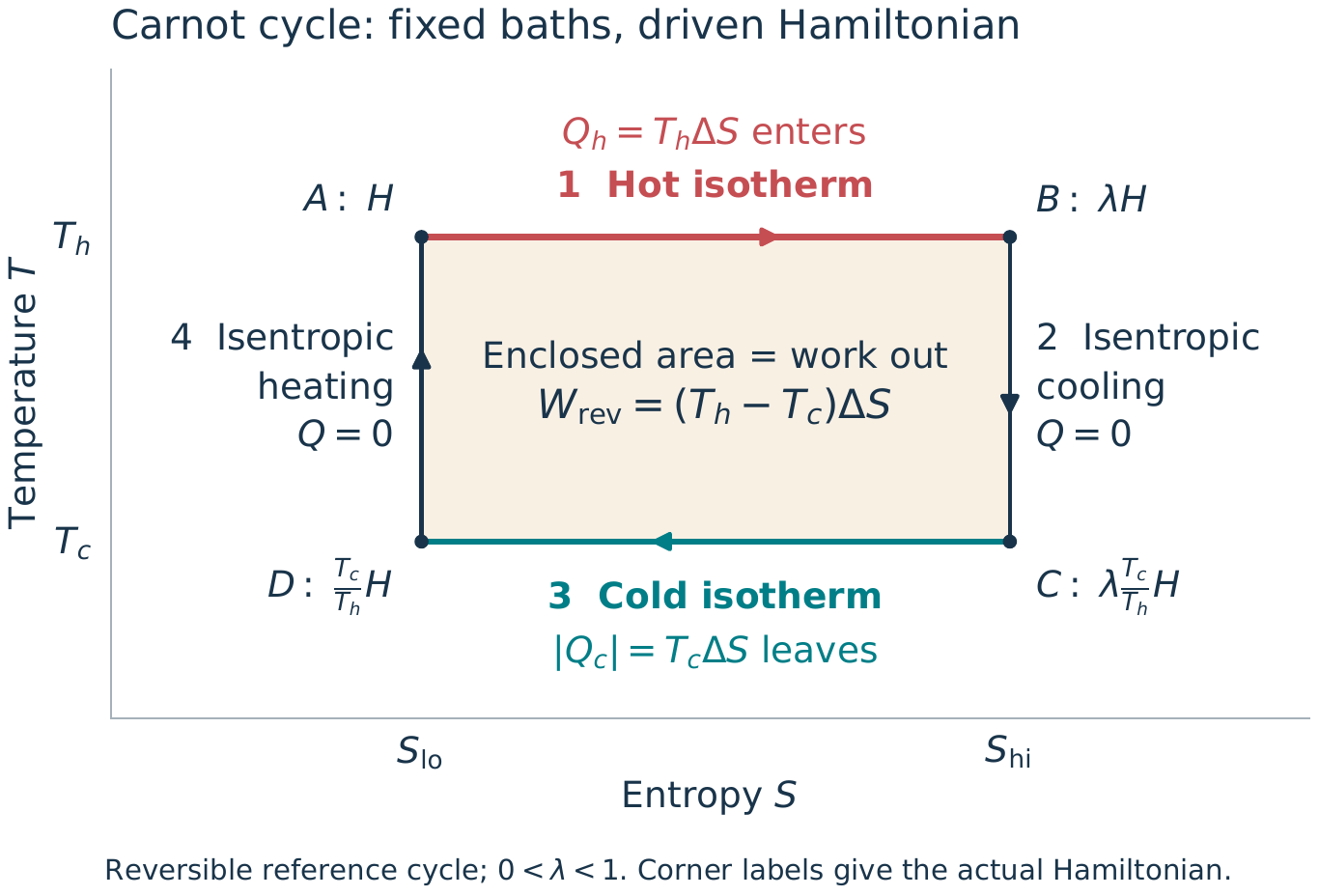}
    \caption{Reversible Carnot cycle ($k_{\rm B}=1$), traversed clockwise.
    Corner labels give the Hamiltonians, with $0<\lambda<1$.
    Isotherms exchange heat; isentropes preserve populations under Hamiltonian
    rescaling. The enclosed area is the reversible work. Finite-time
    isothermal losses reduce it by $ L_h^2/\tau_h+ L_c^2/\tau_c$.}
    \label{sm:fig:carnot-cycle}
\end{figure}
As customary, we take the isentropic strokes to have negligible durations, and we define $\tau_h,\tau_c$ to be the isothermal durations of the $A\rightarrow B$ and $C\rightarrow D$ transformations, respectively. As such, the total duration of the cycle is
\begin{align}
    \tau=\tau_h+\tau_c,
\end{align}
and heat is exchanged solely during the hot and cold isotherms. As a consequence, the total output of the cycle is given by
\begin{align}
    W_{\rm out}=Q_h+Q_c.
\end{align}
At leading slow-driving order we have, in the notation of the main text, the latter equal to (cf. \cite{cavina2017slowdynamics,scandi2019thermodynamiclength,abiuso2020geometric,abiuso2020optimal})
\begin{align}
Q_h&=T_h\Delta S-\frac{\Sigma_h}{\tau_h},&
Q_c&=-T_c\Delta S-\frac{\Sigma_c}{\tau_c},& W_{\rm out}&=(T_h-T_c)\Delta S  -\frac{\Sigma_h}{\tau_h}-\frac{\Sigma_c}{\tau_c}.
\end{align}
The finite-time protocol-dependent dissipations $\Sigma_j/\tau_j$ are positive as implied by the second law, and therefore always detrimental to the engine's output. It follows that for general performance optimization~\cite{abiuso2020geometric} both $\Sigma_h$ and $\Sigma_c$ should be minimized over the possible control protocols. Having this in mind, we substitute them with their attainable geodesic minima  $\Sigma\geq L^2 $ (cf. main text) to obtain
\begin{align}
Q_h&=T_h\Delta S-\frac{ L_h^2}{\tau_h},&
Q_c&=-T_c\Delta S-\frac{L_c^2}{\tau_c},\\
W_{\rm out}&=(T_h-T_c)\Delta S
-\frac{ L_h^2}{\tau_h}-\frac{ L_c^2}{\tau_c},&
\eta&=\frac{W_{\rm out}}{Q_h}=1-\frac{T_c}{T_h}\left(\frac{1+\frac{L_c^2}{\tau_c T_c\Delta S}}{1-\frac{L_h^2}{\tau_h T_h\Delta S}}\right).
\label{sm:eq:app-engine-work}
\end{align}
These expressions characterize the engine's performance in the slow-driving or low-dissipation limit.

\paragraph{A simple observation.} It should be clear already at this point that NEAT is a good candidate for the improvement of performance of a finite-time Carnot engine. All the figures of merit based on work output $W_{\rm out}$, efficiency $\eta$, or power $W_{\rm out}/(\tau_h+\tau_c)$ are decreasing functions of $\{L_h,L_c\}$, while they increase in $\Delta S$. NEAT precisely increases the latter, while decreasing the former.
In the following we make this observation more evident by optimizing the strokes' durations $\tau_{h,c}$ and later assuming scaling-symmetry relations between the hot and cold baths.

\subsection{Optimizing the stroke durations}

Writing $W_{\rm rev}=(T_h-T_c)\Delta S$ and
$\tau=\tau_h+\tau_c$, optimal time allocation (that is, maximum $W_{\rm out}$ at fixed $\tau=\tau_h+\tau_c$) yields
\begin{align}
    \frac{\tau_h}{\tau_c}=\frac{ L_h}{L_c} \quad \text{and} \quad
P(\tau)=\frac{W_{\rm rev}}{\tau}- \frac{(L_h+L_c)^2}{\tau^2}=\frac{(T_h-T_c) \Delta S}{\tau}- \frac{(L_h+L_c)^2}{\tau^2}.
\end{align}
The maximum occurs at $\tau^*=2 ( L_h+ L_c)^2/W_{\rm rev}$, yielding in this regime
\begin{equation}
P_{\rm max}=\frac{[ (T_h-T_c)\Delta S]^2}{4 ( L_h+ L_c)^2}.
\label{sm:eq:app-engine-pmax-general}
\end{equation}

Once again, the above expressions increase monotonically in the entropic swing $\Delta S$, and decrease in the total hot-plus-cold thermodynamic length $L_h+L_c$.
Notice that a NEAT effect instance can in principle be associated with one of the two baths only (namely increasing $\Delta S$ while decreasing $L_j$ in one of the two isotherms $j=h,c$). The behaviour of the ratio $\Delta S/(L_h+L_c)$ is not mathematically guaranteed in such a case, whereas a joint increase of $\Delta S$ and decrease of $L_h+L_c$ necessitates NEAT at least for one of the two isotherms.
Finally, if the two baths satisfy a scaling symmetry of the form
\begin{align}
    L_h=\gamma L_c
\end{align}
with $\gamma$ being independent of the optimized control trajectory, both baths yield NEAT effects on the same endpoints.
An example of such scaling is provided by the Ohmic scaling that follows.

\subsection{Ohmic scaling and the NEAT advantage on maximum power}
In the main text we assumed for the sake of simplicity that the hot and cold metrics satisfy~\cite{abiuso2020geometric,abiuso2020optimal}
\begin{equation}
\frac{L_c}{L_h}=(T_c/T_h)^{-\alpha/2}.
\end{equation}
Here, the exponent $\alpha$ can be derived assuming baths with the same power-law spectral density, $\mathcal J(\nu)\propto\nu^\alpha$ (Ohmic for $\alpha=1$)~\cite{cavina2017slowdynamics,abiuso2020geometric}.
Substitution gives
\begin{equation}
P_{\max}=\frac{\Delta S^2}{L_h^2}
\frac{ (T_h-T_c)^2}
{4T_h[1+(T_c/T_h)^{(1-\alpha)/2}]^2}.
\end{equation}
This scaling applies to suitable power-law baths, whereas it is not automatic for
a structured spectrum (under the proportional-metric assumption, both distances scale together; more generally, an improved maximum power requires an increase of $\Delta S/(L_h+L_c)$).
None of these changes alters the Carnot bound on maximum efficiency
$1-T_c/T_h$, which is recovered for $\tau_{h,c}\rightarrow\infty$.

\putunitbibliography
\end{bibunit}
\end{document}